\newtheorem{theorem}{Theorem}
\newtheorem{example}[theorem]{Example}
\newtheorem{lemma}[theorem]{Lemma}
\newtheorem{proposition}[theorem]{Proposition}
\newtheorem{remark}[theorem]{Remark}
\begin{document}

\title{Tangential Interpolatory Projection for Model Reduction of Linear Quantum Stochastic Systems\thanks{Research supported by the Australian Research Council}}
\author{Onvaree~Techakesari and Hendra~I.~Nurdin 
\footnote{The authors are with the School of Electrical Engineering and Telecommunications, The University of New South Wales (UNSW), 
Sydney NSW 2052, Australia. Email: o.techakesari@unsw.edu.au \& h.nurdin@unsw.edu.au}}

\maketitle \thispagestyle{empty}

\begin{abstract}
This paper presents a model reduction method for the class of linear quantum stochastic systems often encountered in quantum optics and their related fields. The approach is proposed on the basis of an interpolatory projection ensuring that specific input-output responses of the original and the reduced-order systems are matched at multiple selected points (or frequencies). Importantly, the physical realizability property of the original quantum system imposed by the law of quantum mechanics is preserved under our tangential interpolatory projection. An error bound is established for the proposed model reduction method and an avenue to select interpolation points is proposed. A passivity preserving model reduction method is also presented. 
Examples of both active and passive systems are provided to illustrate the merits of our proposed approach.
\end{abstract}

\section{Introduction}

Over the past few decades, considerable interest has been drawn to quantum systems involving open harmonic oscillators linearly coupled to one another and to external Gaussian fields, especially in areas such as quantum optics, optomechanics, and superconducting circuits. This type of systems is used to model quantum devices such as optical cavities, mesoscopic mechanical resonators, optical and superconducting parametric amplifiers, and gradient echo quantum memories (GEM); e.g.. see \cite{GZ04,HM12,KAKKCSL13,HCHJ13}. A number of applications of linear quantum stochastic systems have been theoretically proposed or experimentally demonstrated in the literature. In particular, they can serve as coherent feedback controllers \cite{JNP06,NJP07b}, i.e., feedback controllers that are themselves quantum systems. In this context,  they have been shown to be theoretically effective for cooling of an optomechanical resonator \cite{HM12}, can modify the characteristics of squeezed light produced experimentally by an optical parametric oscillator  (OPO) \cite{CTSAM13}, and, in the setting of microwave superconducting circuits, a linear quantum stochastic system in the form of a Josephson parametric amplifier (JPA) operated in the linear regime has been experimentally demonstrated to be able to rapidly reshape the dynamics of a superconducting electromechanical circuit (EMC) \cite{KAKKCSL13}. Linear quantum stochastic systems can also be used as optical filters for various input signals, including non-Gaussian input signals like single photon and multi-photon states. As filters they can be used to modify the wavepacket shape of single and multi-photon sources \cite{ZJ13,Zhang14}. Also, linear quantum stochastic  systems can dissipatively generate Gaussian cluster states \cite{KY12} as an important component of continuous-variable one way quantum computers \cite{MLGWRN06}. They can also be exploited for classical signal processing on quantum devices, such as in light processors where photons transport information between cores on a chip and between chips \cite{BKSWW07}. 

The linearly coupled open harmonic oscillators can be completely represented in Heisenberg picture of quantum mechanics by a class of linear quantum stochastic systems described using quantum stochastic differential equations (QSDEs) and subsequently, a quartet of matrices $(A,B,C,D)$ (analogous to classical non-quantum linear systems) \cite{BE08,JNP06,GGY08,GJN10,GZ15,NJD08}. However, unlike many classical non-quantum systems, these matrices cannot be arbitrary and they are not independent of one another due to the restrictions imposed by the laws of quantum mechanics. In fact, $A,B$, and $C$ must satisfy certain equality constraints while $D$ must be of a specific form. These constraints are referred to as the {\it physical realizability} conditions \cite{JNP06}. The class of linear quantum stochastic systems provides us with a tool to develop fundamental principles in quantum control theory in a similar way that classical non-quantum linear control systems have facilitated the advancement of classical systems and control theory.

In many linear quantum control problems, the input-output relation of the system (or controller) described by its transfer function is more important than its realization described by the matrices $(A,B,C,D)$ e.g., see \cite{YK03a,YK03b,JNP06,GGY08,Mab08,NJP07b,MP11a}. 
Unfortunately, the system transfer functions may be complex involving a large degree of freedom and hence, physically implementing systems with such transfer functions can be a challenging task. In these situations, it is more appealing to construct an approximating system with a lower degree of freedom such that the transfer functions of the original and the approximating systems are closely matched. The process of constructing a lower order approximating system is known as model reduction. The contribution of this paper lies in this model reduction problem.

A challenge facing model reduction in quantum setting is the retention of the physical realizability property. In \cite{BvHS07,GNW10,P13,VP11}, singular perturbation methods (also known as adiabatic elimination) were studied for reduction of linear quantum systems comprising subsystems that evolve at well-separated time-scales. In \cite{Pet12}, an eigenvalue truncation method was proposed for a sub-class of completely passive systems with distinct poles. More recently, an adaption of the well-known balanced truncation method was proposed in \cite{Nurd14} for linear quantum stochastic systems whose controllability and observability Gramians satisfy some commutation conditions. 

As the existing model reduction methods are limited to sub-classes of systems with specific properties, this paper presents a model reduction approach that allows an approximation of a more general linear quantum stochastic system. In particular, we propose a model reduction method on the basis of the tangential interpolatory projection method introduced in \cite{GVV04}. This approach constructs an approximating system such that its transfer function interpolates the transfer function of the original system at multiple points along some tangent directions. Importantly, we show that the reduced system is physically realizable. An error bound is established for the proposed method, and an avenue to select interpolation points and tangent directions are presented. 

The remainder of this paper is structured as follows. In Section \ref{sec:prelim}, we define the class of linear quantum stochastic systems under consideration. The tangential interpolatory projection model reduction approach is proposed in Section \ref{sec:modred}. Section \ref{sec:passive} presents a passivity preserving model reduction method. Concluding remarks are then provided in Section \ref{sec:conclusion}.

\section{Preliminaries} \label{sec:prelim}

\subsection{Notation}
We will use the following notation: $\imath=\sqrt{-1}$, $^*$ denotes the adjoint of a linear operator
as well as the conjugate of a complex number. If $A=[a_{jk}]$ then $A^{\#}=[a_{jk}^*]$, and $A^{\dag}=(A^{\#})^{\top}$, where $(\cdot)^{\top}$ denotes matrix transposition.  $\Re\{A\}=(A+A^{\#})/2$ and $\Im\{A\}=\frac{1}{2\imath}(A-A^{\#})$.
We denote the identity matrix by $I$ whenever its size can be
inferred from context and use $I_{n}$ to denote an $n \times n$
identity matrix. Similarly, $0_{m \times n}$ denotes  a $m \times n$ matrix with zero
entries but we drop the subscript when its dimension  can be determined from context. We use
${\rm diag}(M_1,M_2,\ldots,M_n)$  to denote a block diagonal matrix with
square matrices $M_1,M_2,\ldots,M_n$ on its diagonal, and ${\rm diag}_{n}(M)$
denotes a block diagonal matrix with the square matrix $M$ appearing on its diagonal blocks $n$ times. 
Also, we let \small $\mathbb{J}=\left[\begin{array}{rr}0 & 1\\-1&0\end{array}\right]$\normalsize and $\mathbb{J}_n=I_{n} \otimes \mathbb{J}={\rm diag}_n(\mathbb{J})$. We use $\mathbb{N}$ to denote the set of natural numbers.
We use $e_i = [0,0,\ldots,0,1,0,\ldots,0]^\top$ to denote an indicator vector with one in the $i$-th position and zero elsewhere.
For a subspace ${\cal H}$ of a vector space, we write $P_{\cal H}$ to denote an orthogonal projection onto ${\cal H}$.
We also write ${\cal H}^\perp$ to denote the orthogonal complement of ${\cal H}$.

\subsection{The class of linear quantum stochastic systems in quadrature form}

A {\em linear quantum stochastic system} \cite{BE08,JNP06,NJD08} is an open Markov quantum system involving $n$ independent quantum harmonic oscillators coupled to $m$ independent external continuous-mode bosonic fields. 
Let $x=(q_1,p_1,q_2,p_2,\ldots,q_n,p_n)^\top$ denote a vector of  the canonical position and momentum operators of a {\em $n$ degree of freedom quantum harmonic oscillator} satisfying the canonical commutation relations (CCR)  $xx^\top -(xx^\top )^\top =2\imath \mathbb{J}_n$. That is, $q_j$ and $p_j$ are position and momentum operators of $j$-th harmonic oscillator, respectively.
Let $Y(t)=(Y_1(t),\ldots,Y_m(t))^{\top}$ denote a vector of continuous-mode bosonic {\em output fields} that results from the interaction of the quantum harmonic oscillators and the incoming continuous-mode bosonic quantum fields in the $m$-dimensional vector
$\mathcal{A}(t)$. Note that the dynamics of $x(t)$ is linear, and
$Y(t)$ depends linearly on $x(s)$ with $0 \leq s \leq t$. 
Following \cite{JNP06}, the dynamics of a linear quantum stochastic system can be described in quadrature form as
\begin{align}
dx(t)&=Ax(t)dt+Bdw(t);\quad x(0)=x. \nonumber\\
dy(t)&= C x(t)dt+ D dw(t), \label{eq:qsde-out-quad}
\end{align}
where $A \in \mathbb{R}^{2n\times 2n}$, $B \in \mathbb{R}^{2n\times 2m}$, $C \in \mathbb{R}^{2m\times 2n}$, and $D \in \mathbb{R}^{2m\times 2m}$,
with
\begin{align*}
 w(t)
 &= 2\big(\Re\{\mathcal{A}_1(t)\},\Im\{\mathcal{A} _1(t)\},\Re\{\mathcal{A} _2(t)\},\Im\{\mathcal{A} _2(t)\},\ldots, \Re\{\mathcal{A} _m(t)\},\Im\{\mathcal{A} _m(t)\}\big)^{\top}, \\
 y(t)
 &=2\big(\Re\{Y_1(t)\},\Im\{Y_1(t)\},\Re\{Y_2(t)\},\Im\{Y_2(t)\}, \ldots, \Re\{Y_m(t)\}, \Im\{Y_m(t)\}\big)^{\top}.
\end{align*}
Here, $w(t)$ (input fields in quadrature form) is taken to be in a vacuum state where it satisfies the It\^{o} relationship $dw(t)dw(t)^{\top} = (I+\imath \mathbb{J}_m)dt$; see \cite{JNP06}. Note that in this form it follows that $D$ is a real unitary symplectic matrix. That is, it is both unitary (i.e., $DD^{\top}=D^{\top} D=I$) and symplectic (i.e., $D \mathbb{J}_m D^{\top} =\mathbb{J}_m$). However, in the most general case, $D$ can be generalized to a symplectic matrix that represents a quantum network that includes ideal squeezing devices acting on the incoming field $w(t)$ before interacting with the system \cite{GJN10,NJD08}.  
In general one may not be interested in all outputs of the system but only in a subset of them, see, e.g., \cite{JNP06}. That is, one is often only interested in certain pairs of the output field quadratures in $y(t)$. Thus, in the most general scenario, $y(t)$ can have an even dimension $2\ell <2m$. 

Unlike classical non-quantum systems, the matrices $A$, $B$, $C$, $D$ of a linear quantum stochastic system cannot be arbitrary and are not independent of one another. In fact, for the system to be physically realizable \cite{JNP06,NJD08}, meaning it represents a meaningful physical system, they must satisfy the constraints (see \cite{JNP06,NJD08,GJN10,Nurd14}):
\begin{eqnarray}
&&A\mathbb{J}_n + \mathbb{J}_n A^{\top} + B\mathbb{J}_mB^{\top}=0, \label{eq:pr-1}\\
&& \mathbb{J}_n  C^{\top} +B\mathbb{J}_mD^{\top}=0, \label{eq:pr-2}\\
&&D \mathbb{J}_m D^{\top} =  \mathbb{J}_{\ell}. \label{eq:pr-3}
\end{eqnarray}

\section{Tangential Interpolatory Projection for Model Reduction} \label{sec:modred}

Interpolatory projection model reduction framework was developed by De Villemagne and Skelton in \cite{DS85} to construct a reduced-order system such that its transfer function interpolates the transfer function of the original system at a single point.
Limitations of this single-interpolation-point approach were discussed in \cite{GGV96,Gphd97} including the loss of approximation accuracy away from the single interpolation point.
For this reason, multi-point interpolatory projection framework was introduced. In particular, two key tangential interpolatory projection approaches were proposed in \cite{GVV04}: the {\em left} and the {\em right} tangential interpolatory projections.
Let $\Xi(s)$ and $\Xi_r(s)$ be the transfer functions of the original and the reduced-order systems, respectively.
Given a set of $2r$ interpolation points $\{\sigma_1,\sigma_2,\ldots,\sigma_{2r}\}$ and $2r$ left (or output) tangent directions  $\{ \mu_1,\mu_2,\ldots,\mu_{2r}\} \subset \mathbb{C}^{2\ell}$, the left tangential interpolatory projection approach ensures that
\begin{equation}
 \mu^\dagger_i\Xi_r(\sigma_i) = \mu^\dagger_i\Xi(\sigma_i) \label{eq:interpol-cond-left}
\end{equation}
for all $i = 1,2,\ldots,2r$.
Similarly, the right interpolatory approach ensures that
\begin{equation}
 \Xi_r(\sigma_i)\nu_i = \Xi(\sigma_i)\nu_i \label{eq:interpol-cond-right}
\end{equation}
for all $i = 1,2,\ldots,2r$,
where $\nu_1,\nu_2,\ldots,\nu_{2r} \in \mathbb{C}^{2m}$ are the right (or input) tangent directions.

We stress that we are interested in constructing a reduced system while meeting either \eqref{eq:interpol-cond-left} or \eqref{eq:interpol-cond-right}. In general, one might consider the left tangential interpolatory projection approach (instead of the right) because the left tangent directions belong to a smaller complex space $\mathbb{C}^{2\ell}$ than (or at most equal to) the space $\mathbb{C}^{2m}$ where each right tangent direction lives in (since $\ell \leq m$ for the class of linear quantum stochastic systems). The smaller space reduces the choices of tangent directions and may simplify the selection of appropriate tangent directions. Intuitively, this may also help in mitigating the effects the use of inadequate tangent directions may have on model approximation error.
However, in some situations, it may be more appropriate to consider the right tangential interpolatory projection e.g., when one wants to place more emphasis or weights on the input components that are of interest. For this reason, we will present both the left and the right tangential interpolatory projection approaches.

\subsection{Petrov-Galerkin Projection Approximation}

The tangential interpolatory projection method can be achieved by constructing a reduced-order model of a linear quantum stochastic system via Petrov-Galerkin projection approximation. That is, given a full-order linear quantum stochastic system in quadrature form \eqref{eq:qsde-out-quad}, we construct a reduced-order linear quantum stochastic system of the form
\begin{align}
 dx_r(t) &= A_r x_r(t) dt + B_r dw(t); \quad x_r(0) = W_q^\top x \nonumber \\
 dy_r(t) &= C_r x_r(t) dt + D_r dw(t)  \label{eq:qsde-quad-reduced}
\end{align}
where $A_r = W_q^\top A V_q$, $B_r = W_q^\top B$, $C_r = CV_q$, $D_r = D$.
Here, $W_q \in \mathbb{R}^{2n \times 2r}$ and $V_q \in \mathbb{R}^{2n \times 2r}$ are the left and right projection matrices, respectively, which satisfy the condition $W_q^\top V_q = I$. 

Let $\widehat{\Xi}(s) = C\left(sI - A\right)^{-1} B$ and $\widehat{\Xi}_r(s) = C_r\left(sI - A_r\right)^{-1} B_r$.
Since we consider $D_r = D$, it can be seen that the left tangential interpolation condition \eqref{eq:interpol-cond-left} can be achieved by meeting a simpler condition $\mu_i^\dagger\widehat{\Xi}(\sigma_i) = \mu_i^\dagger\widehat{\Xi}_r(\sigma_i)$ for all $i = 1,2,\ldots,2r$. Similarly, the right tangential interpolation condition \eqref{eq:interpol-cond-right} is attained when $\widehat{\Xi}(\sigma_i)\nu_i = \widehat{\Xi}_r(\sigma_i)\nu_i$ is satisfied for all $i = 1,2,\ldots,2r$.
On the basis of these simplified interpolation conditions, we can now re-state an important result on tangential interpolatory projection in terms of our linear quantum stochastic systems in quadrature form and Petrov-Galerkin projection approximation.

\begin{theorem} \cite[Theorem 1]{ABG10} \label{thm:interpol-quad}
Given a full-order linear quantum stochastic system written in quadrature form \eqref{eq:qsde-out-quad} with the transfer function $\Xi(s)$, consider a reduced-order model of the form \eqref{eq:qsde-quad-reduced} constructed through Petrov-Galerkin projection approximation with the transfer function $\Xi_r(s)$. Then we have that
\begin{enumerate} 
\item[(i)] the left interpolation condition \eqref{eq:interpol-cond-left} holds if $W_q$ has full rank and $\left(\mu_i^\dagger C(\sigma_i I - A)^{-1}\right)^\dagger \in {\rm range}(W_q)$ for all $i = 1,2,\ldots,2r$.

\item[(ii)] the right interpolation condition \eqref{eq:interpol-cond-right} holds if $V_q$ has full rank and $(\sigma_i I - A)^{-1}B\nu_i \in {\rm range}(V_q)$ for all $i = 1,2,\ldots,2r$.
\end{enumerate}
\end{theorem}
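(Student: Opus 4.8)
The plan is to prove this as a standard rational-interpolation-by-projection result, following the classical Krylov-subspace argument adapted to the quadrature-form setting. Since $D_r = D$, it suffices to show that the strictly-proper parts interpolate, i.e. $\mu_i^\dagger \widehat{\Xi}(\sigma_i) = \mu_i^\dagger \widehat{\Xi}_r(\sigma_i)$ in case (i) and $\widehat{\Xi}(\sigma_i)\nu_i = \widehat{\Xi}_r(\sigma_i)\nu_i$ in case (ii), as already noted in the text. I will prove case (ii) in detail and obtain case (i) by the transpose/duality symmetry $\widehat{\Xi}(s)^\top = B^\top(sI - A^\top)^{-1}C^\top$, swapping the roles of $V_q$ and $W_q$.

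For case (ii), the key algebraic device is to introduce, for each fixed $i$, the ``error'' vector obtained by comparing the full resolvent applied to $B\nu_i$ with the reduced one embedded back into $\mathbb{R}^{2n}$. First I would set $v_i := (\sigma_i I - A)^{-1} B \nu_i$, which lies in $\mathrm{range}(V_q)$ by hypothesis, so $v_i = V_q z_i$ for some $z_i \in \mathbb{C}^{2r}$, and moreover $V_q W_q^\top v_i = v_i$ because $W_q^\top V_q = I$ (this is the crucial consequence of the biorthogonality condition: $V_q W_q^\top$ acts as the identity on $\mathrm{range}(V_q)$). Next I would compute $\widehat{\Xi}_r(\sigma_i)\nu_i = C_r(\sigma_i I - A_r)^{-1} B_r \nu_i = C V_q (\sigma_i I - W_q^\top A V_q)^{-1} W_q^\top B \nu_i$. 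The heart of the argument is to show $(\sigma_i I - W_q^\top A V_q)^{-1} W_q^\top B \nu_i = z_i$, equivalently $W_q^\top B\nu_i = (\sigma_i I - W_q^\top A V_q) z_i = \sigma_i z_i - W_q^\top A V_q z_i = W_q^\top(\sigma_i v_i - A v_i)$; but $\sigma_i v_i - A v_i = (\sigma_i I - A) v_i = B\nu_i$ by the definition of $v_i$, so the identity holds. Hence $\widehat{\Xi}_r(\sigma_i)\nu_i = C V_q z_i = C v_i = \widehat{\Xi}(\sigma_i)\nu_i$, and adding back $D\nu_i = D_r\nu_i$ gives the full interpolation condition \eqref{eq:interpol-cond-right}. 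The full-rank hypothesis on $V_q$ is what guarantees $z_i$ is well-defined and that $\sigma_i I - A_r$ is invertible at the relevant points (equivalently, $\sigma_i$ is not a pole of the reduced system), which is needed for $\widehat{\Xi}_r(\sigma_i)$ to make sense.

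For case (i), I would apply the same computation to the transposed system: the condition $(\mu_i^\dagger C(\sigma_i I - A)^{-1})^\dagger = (\sigma_i^* I - A^\top)^{-1} C^\top \mu_i^{\#} \in \mathrm{range}(W_q)$ is exactly the right-Krylov condition for the pair $(A^\top, C^\top)$ with projection matrix $W_q$ playing the role of $V_q$, so by the argument above $\mu_i^\dagger C(\sigma_i I - A)^{-1} B = \mu_i^\dagger C_r(\sigma_i I - A_r)^{-1} B_r$, using $A_r^\top = V_q^\top A^\top W_q$ and $C_r^\top = V_q^\top C^\top$ and the biorthogonality $V_q^\top W_q = I$. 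Since this is cited as \cite[Theorem 1]{ABG10}, I would keep the write-up brief, presenting the above as the short self-contained verification and remarking that it is the quadrature-form specialization of the standard result.

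The main obstacle, such as it is, is bookkeeping rather than depth: one must be careful that $V_q W_q^\top$ is only a projector onto $\mathrm{range}(V_q)$ (not an orthogonal projector) and that it acts as the identity \emph{precisely} on $\mathrm{range}(V_q)$, which is where the hypothesis $v_i \in \mathrm{range}(V_q)$ is consumed; and one must confirm that the reduced resolvent $(\sigma_i I - A_r)^{-1}$ exists at each $\sigma_i$, which follows because the displayed identity $W_q^\top B\nu_i = (\sigma_i I - A_r)z_i$ together with $C v_i$ being finite forces $\sigma_i$ to lie in the resolvent set of $A_r$ whenever it lies in that of $A$ and $V_q$ has full column rank. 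No genuinely quantum-mechanical input is needed here — physical realizability of the reduced system is a separate matter handled elsewhere in the paper.
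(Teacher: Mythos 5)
Your core computation is correct and is exactly the standard Krylov/Petrov--Galerkin argument behind the cited result; note that the paper itself supplies no proof of this theorem at all --- it is quoted verbatim from \cite[Theorem 1]{ABG10} --- so your write-up is a self-contained verification of a cited fact rather than an alternative to anything in the paper. The chain $v_i=(\sigma_iI-A)^{-1}B\nu_i=V_qz_i$, $z_i=W_q^\top v_i$ (by $W_q^\top V_q=I$), $W_q^\top B\nu_i=(\sigma_iI-A_r)z_i$, hence $\widehat{\Xi}_r(\sigma_i)\nu_i=CV_qz_i=Cv_i=\widehat{\Xi}(\sigma_i)\nu_i$, is precisely how this is proved in the interpolatory model reduction literature, and reducing (i) to (ii) by transposition is legitimate since $A,B,C,W_q,V_q$ are real (which also makes ${\rm range}(W_q)$ conjugation-closed, so your harmless slip of writing $\mu_i^{\#}$ where the hypothesis vector is $(\sigma_i^{*}I-A^\top)^{-1}C^\top\mu_i$ does not affect the argument).

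One remark in your final paragraph is, however, not right and should be removed or restated: the invertibility of $\sigma_iI-A_r$ does \emph{not} follow from the identity $W_q^\top B\nu_i=(\sigma_iI-A_r)z_i$, from finiteness of $Cv_i$, or from $V_q$ having full column rank. That identity only says $W_q^\top B\nu_i$ lies in the range of $\sigma_iI-A_r$; it is perfectly possible for $\sigma_i$ to be in the resolvent set of $A$, for the subspace condition to hold, and yet for $\sigma_i$ to be an eigenvalue of $A_r$ (in which case $\widehat{\Xi}_r(\sigma_i)$ need not be defined). The correct handling --- as in \cite[Theorem 1]{ABG10}, and implicitly in the interpolation conditions \eqref{eq:interpol-cond-left}--\eqref{eq:interpol-cond-right} themselves, which presuppose that $\Xi_r(\sigma_i)$ exists --- is to treat nonsingularity of $\sigma_iI-A_r$ at the interpolation points as a standing (generic) assumption, not as a consequence of the other hypotheses. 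With that one sentence corrected, your proof is complete.
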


For many classical (non-quantum) systems, it is possible to choose $W_q$ and $V_q$ independently so that both the left and right interpolation conditions \eqref{eq:interpol-cond-left}-\eqref{eq:interpol-cond-right} are satisfied. 
Unfortunately for quantum systems, due to the law of quantum mechanics, the projection matrices $W_q$ and $V_q$ cannot be independently selected. Hence, using the above result, we will present avenues to choose $W_q$ and $V_q$ such that either the left interpolation condition \eqref{eq:interpol-cond-left} or the right interpolation condition \eqref{eq:interpol-cond-right} is satisfied, while ensuring that the physical realizability property of a linear quantum stochastic system is preserved.
However, before presenting our physical realizability preserving model reduction method, let us recall the following well-known result.

\begin{lemma} \cite{HJ85}\label{lem:skewSymmetric}
For any full-rank (non-singular) real skew-symmetric matrix $\Theta \in \mathbb{R}^{2n\times 2n}$ (i.e., $\Theta = -\Theta^\top$), there exists a non-singular (full-rank) real matrix $T \in \mathbb{R}^{2n\times 2n}$ such that $T \Theta T^\top = \mathbb{J}_n$.
\end{lemma}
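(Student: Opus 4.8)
The plan is to exhibit $\mathbb{J}_n$ as the matrix of the skew-symmetric bilinear form $\omega(u,v):=u^\top\Theta v$ in a well-chosen basis of $\mathbb{R}^{2n}$, and then read $T$ off from that basis. Indeed, if $v_1,\dots,v_{2n}$ is any basis and $T$ is the matrix whose $i$-th row is $v_i^\top$, then $\bigl(T\Theta T^\top\bigr)_{ij}=v_i^\top\Theta v_j=\omega(v_i,v_j)$, and $T$ is automatically nonsingular; so the whole statement reduces to producing a basis in which the Gram matrix of $\omega$ is exactly ${\rm diag}_n(\mathbb{J})=\mathbb{J}_n$. Such a basis is a symplectic basis for $\omega$, and I would construct one by the classical symplectic Gram--Schmidt procedure, run by induction on $n$.

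For the inductive step: since $\Theta$ is nonsingular, $\omega$ is nondegenerate, so for any chosen $v_1\neq 0$ there is a $u$ with $\omega(v_1,u)\neq 0$; rescaling $u$ gives $v_2$ with $\omega(v_1,v_2)=1$, hence $\omega(v_2,v_1)=-1$, while $\omega(v_1,v_1)=\omega(v_2,v_2)=0$ by skew-symmetry. Thus $v_1,v_2$ are linearly independent and the Gram matrix of $\{v_1,v_2\}$ is $\mathbb{J}$. Let $W=\{w\in\mathbb{R}^{2n}:\omega(v_1,w)=\omega(v_2,w)=0\}$. Using $\omega(v_1,v_2)=1$ one checks that every $x$ splits as $x=\bigl(\omega(x,v_2)v_1-\omega(x,v_1)v_2\bigr)+w$ with $w\in W$, and that $\mathrm{span}\{v_1,v_2\}\cap W=\{0\}$, so $\mathbb{R}^{2n}=\mathrm{span}\{v_1,v_2\}\oplus W$; moreover $\omega$ restricted to $W$ is again nondegenerate, since a vector of $W$ pairing to zero with all of $W$ would then pair to zero with all of $\mathbb{R}^{2n}$, contradicting $\det\Theta\neq 0$. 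Hence $\dim W=2n-2=2(n-1)$, and the induction hypothesis supplies $v_3,\dots,v_{2n}\in W$ with Gram matrix $\mathbb{J}_{n-1}$. Ordering the basis as $(v_1,v_2,v_3,\dots,v_{2n})$ makes the full Gram matrix block diagonal with blocks $\mathbb{J}$, i.e.\ $\mathbb{J}_n$, and $T=[\,v_1\ v_2\ \cdots\ v_{2n}\,]^\top$ finishes the argument.

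An alternative route, which yields more (orthogonal congruence up to a diagonal scaling), is via the real spectral theorem: $\imath\Theta$ is Hermitian, so $\Theta$ is orthogonally similar --- hence, since $Q^{-1}=Q^\top$, orthogonally congruent --- to a block-diagonal matrix with $2\times 2$ blocks $\lambda_j\mathbb{J}$, the $\lambda_j$ real and nonzero (nonsingularity rules out a zero block and forces the size to be even). A block-diagonal rescaling by ${\rm diag}\bigl(|\lambda_j|^{-1/2}I_2\bigr)$ turns each block into $\pm\mathbb{J}$, and conjugating each $-\mathbb{J}$ block by the $2\times 2$ swap matrix $\left[\begin{smallmatrix}0&1\\1&0\end{smallmatrix}\right]$ turns it into $\mathbb{J}$; composing the three transformations gives the desired $T$.

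I do not expect a serious obstacle --- both arguments are standard --- but two points need care in a clean write-up. First, the ordering of the symplectic basis must be chosen so that the Gram matrix comes out in the \emph{block-diagonal} form $\mathbb{J}_n={\rm diag}_n(\mathbb{J})$ rather than the \emph{split} form $\left[\begin{smallmatrix}0&I_n\\-I_n&0\end{smallmatrix}\right]$ (these are congruent via a permutation, but that should not be left implicit). Second, the inheritance of nondegeneracy of $\omega$ by the $\omega$-orthogonal complement $W$ is exactly what powers the induction and simultaneously proves that $2n$ must be even; this is the one place the nonsingularity hypothesis on $\Theta$ is genuinely used, so it is the step I would be careful to spell out in full.
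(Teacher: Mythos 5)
The paper does not prove this lemma at all: it is quoted as a known result with a citation to Horn and Johnson [HJ85], so there is no in-paper argument to compare against. Your proof is correct and self-contained. The first route is the standard symplectic Gram--Schmidt (Darboux basis) induction, and every step that needs care is in fact handled: the identity $(T\Theta T^\top)_{ij}=v_i^\top\Theta v_j$, the direct-sum decomposition $\mathbb{R}^{2n}=\mathrm{span}\{v_1,v_2\}\oplus W$ via $x=\bigl(\omega(x,v_2)v_1-\omega(x,v_1)v_2\bigr)+w$, the inheritance of nondegeneracy by $W$ from $\det\Theta\neq 0$, and the pairwise ordering of the basis so that the Gram matrix comes out as ${\rm diag}_n(\mathbb{J})=\mathbb{J}_n$ rather than the split form. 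The second route (orthogonal reduction of a real skew-symmetric matrix to $2\times 2$ blocks $\lambda_j\mathbb{J}$, followed by the scaling ${\rm diag}\bigl(|\lambda_j|^{-1/2}I_2\bigr)$ and a swap congruence on the $-\mathbb{J}$ blocks) is essentially the argument implicit in the cited reference, since the block canonical form is exactly the Horn--Johnson result; it buys slightly more (congruence by an orthogonal matrix composed with a block-diagonal normalization), at the cost of invoking the spectral/normal-form theorem, whereas your inductive argument is elementary and uses only nondegeneracy of the form. Either write-up would serve as a valid proof of the lemma as stated.
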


\subsection{Proposed Physical Realizability Preserving Model Reduction}

Physical realizability is an important property of a quantum system. In constructing a reduced-order model of a linear quantum stochastic system, it is important to ensure that the reduced system represents a meaningful physical quantum system. 
As previously mentioned, the physically realizability property places some restrictions on the projection matrices $W_q$ and $V_q$.
Motivated by the method presented in \cite{DS85}, we will now present avenues to choose $W_q$ and $V_q$ so that the reduced system is guaranteed to be physically realizable.

\subsubsection{Left Tangential Interpolatory Projection}

To achieve the left tangential interpolation condition \eqref{eq:interpol-cond-left}, consider a subspace
\begin{align}
 {\cal W}_q 
 &\triangleq {\rm span}\left\{ \left(\mu_i^\dagger C(\sigma_iI-A)^{-1}\right)^\dagger\right\}_{i = 1,2,\ldots,2r}
 \label{eq:Wsubspace-quad}
\end{align}
where interpolation points, $\{\sigma_1,\sigma_2,\ldots,\sigma_{2r}\}$, and the left tangent directions, $\{\mu_1,\mu_2,\ldots,\mu_{2r}\}$, are chosen such that
\begin{enumerate}
 \item[(i)] $(\sigma_i I - A)$ is invertible for all $i = 1,2,\ldots,2r$,
 \item[(ii)] $\mu_i \neq 0$ for all $i = 1,2,\ldots,2r$,
 \item[(iii)] the subspace ${\cal W}_q$ has dimension $2r$, and
 \item[(iv)] a real basis $\widehat{W}_q$ of ${\cal W}_q$ exists (see Remark \ref{remark:real-basis} on the existence of a real basis) and $\widehat{W}_q^\top \mathbb{J}_n \widehat{W}_q$ has full rank. 
\end{enumerate}
We highlight that the subspace is required to have dimension $2r$ to ensure that the the reduced system has $r$ degrees of freedom.
We also note that $\widehat{W}_q^\top \mathbb{J}_n \widehat{W}_q$ has full rank if and only if the test matrix \small $\left[\begin{array}{cc} 0_{r\times r} & \widehat{W}_q^\top \\ \mathbb{J}_n \widehat{W}_q & I_{n} \end{array}\right]$ \normalsize has full rank; see \cite[Eq. (3.4)]{MS74}.
From direct inspection, we can see that the test matrix would generally have full rank except for some $W_q$ with specific structures. In other words, $\{\sigma_1,\sigma_2,\ldots,\sigma_{2r}\}$ and $\{\mu_1,\mu_2,\ldots,\mu_{2r}\}$ can typically be chosen such that $\widehat{W}_q^\top \mathbb{J}_n \widehat{W}_q$ has full rank except for systems with specific forms of $A$ and $C$.

We then propose the projection matrices, $W_q$ and $V_q$, to be 
\begin{align}
 W_q &= \widehat{W}_q T^{\top} \label{eq:left-proj-quad} \\
 V_q &= \mathbb{J}_n W_q \left(W_q^\top \mathbb{J}_n W_q\right)^{-1} \label{eq:right-proj-quad}
\end{align}
where $T \in \mathbb{R}^{2r\times 2r}$ is a non-singular transformation matrix such that $T (\widehat{W}_q^\top \mathbb{J}_n \widehat{W}_q)T^\top = \mathbb{J}_r$. The existence of $T$ is guaranteed by Lemma \ref{lem:skewSymmetric} when Condition (iv) stated above holds.
Here, $W_q$ is also a real basis of ${\cal W}_q$ because $T$ is a non-singular matrix. Thus, the product $W_q^\top \mathbb{J}_n W_q$ is invertible because $\widehat{W}_q^\top \mathbb{J}_n \widehat{W}_q$ is assumed to have full rank. We highlight that, by choosing the above projection matrices, $W_q^\top V_q = I$ as required by Petrov-Galerkin projection approximation.
We now present a new result showing that, using the above projection matrices, our proposed reduced-order linear quantum stochastic system \eqref{eq:qsde-quad-reduced} is physically realizable while meeting the interpolation condition \eqref{eq:interpol-cond-left}.

\begin{theorem} \label{thm:pr-reduced-quad}
Given a physically realizable full-order linear quantum stochastic system written in quadrature form \eqref{eq:qsde-out-quad} with the transfer function $\Xi(s)$,
consider a reduced-order model of the form \eqref{eq:qsde-quad-reduced} with the projection matrices $W_q$ and $V_q$ given by \eqref{eq:left-proj-quad} and \eqref{eq:right-proj-quad}, respectively. Then the reduced-order system is also physically realizable and its transfer function $\Xi_r(s)$ interpolates $\Xi(s)$ in the sense of \eqref{eq:interpol-cond-left}.
\end{theorem}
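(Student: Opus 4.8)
The plan is to prove the two assertions separately, the interpolation property first (it is almost immediate) and physical realizability second. For interpolation, note that since $T$ is non-singular, $W_q=\widehat W_q T^\top$ is again a real basis of ${\cal W}_q$; hence $W_q$ has full (column) rank $2r$ and ${\rm range}(W_q)={\cal W}_q$, which by \eqref{eq:Wsubspace-quad} contains $\big(\mu_i^\dagger C(\sigma_i I-A)^{-1}\big)^\dagger$ for every $i=1,\dots,2r$. Since also $W_q^\top V_q=I$ (as already observed following \eqref{eq:right-proj-quad}), the hypotheses of Theorem~\ref{thm:interpol-quad}(i) are met, and \eqref{eq:interpol-cond-left} follows at once.

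For physical realizability, I would first record the algebraic identity that does all the work. By construction $T\big(\widehat W_q^\top\mathbb{J}_n\widehat W_q\big)T^\top=\mathbb{J}_r$, so
\[
W_q^\top\mathbb{J}_n W_q=T\big(\widehat W_q^\top\mathbb{J}_n\widehat W_q\big)T^\top=\mathbb{J}_r .
\]
Consequently $x_r=W_q^\top x$ obeys the standard CCR $x_rx_r^\top-(x_rx_r^\top)^\top=2\imath W_q^\top\mathbb{J}_n W_q=2\imath\mathbb{J}_r$, so the reduced model is genuinely an $r$-mode system in quadrature form, and \eqref{eq:right-proj-quad} simplifies to $V_q=\mathbb{J}_n W_q\mathbb{J}_r^{-1}$. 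Rearranging (and transposing, using $\mathbb{J}_n^\top=-\mathbb{J}_n$, $\mathbb{J}_r^\top=-\mathbb{J}_r$) gives the two identities
\[
\mathbb{J}_n W_q=V_q\mathbb{J}_r,\qquad W_q^\top\mathbb{J}_n=\mathbb{J}_r V_q^\top,
\]
whose well-definedness rests on Condition~(iv) together with Lemma~\ref{lem:skewSymmetric} (invertibility of $W_q^\top\mathbb{J}_n W_q$ and existence of the real $T$).

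With these identities the three realizability conditions for $(A_r,B_r,C_r,D_r)=(W_q^\top AV_q,\,W_q^\top B,\,CV_q,\,D)$ reduce to sandwiching the corresponding conditions \eqref{eq:pr-1}--\eqref{eq:pr-3} of the original system between $W_q^\top$ and $W_q$. Concretely, using $V_q\mathbb{J}_r=\mathbb{J}_n W_q$ and $\mathbb{J}_r V_q^\top=W_q^\top\mathbb{J}_n$,
\[
A_r\mathbb{J}_r+\mathbb{J}_r A_r^\top+B_r\mathbb{J}_m B_r^\top=W_q^\top\big(A\mathbb{J}_n+\mathbb{J}_n A^\top+B\mathbb{J}_m B^\top\big)W_q=0,
\]
\[
\mathbb{J}_r C_r^\top+B_r\mathbb{J}_m D_r^\top=W_q^\top\big(\mathbb{J}_n C^\top+B\mathbb{J}_m D^\top\big)=0,
\]
by \eqref{eq:pr-1} and \eqref{eq:pr-2} respectively, while \eqref{eq:pr-3} is inherited verbatim since $D_r=D$ and the output dimension $2\ell$ is unchanged, giving $D_r\mathbb{J}_m D_r^\top=\mathbb{J}_\ell$. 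This establishes physical realizability of the reduced system.

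I do not anticipate a genuine obstacle: once the normalization $W_q^\top\mathbb{J}_n W_q=\mathbb{J}_r$ and the identities $\mathbb{J}_n W_q=V_q\mathbb{J}_r$ are in place, each realizability condition is a one-line congruence. The only point demanding care is the bookkeeping of signs and transposes when passing between $\mathbb{J}_n$ and $\mathbb{J}_r$, and citing the structural hypotheses (Condition~(iv), Lemma~\ref{lem:skewSymmetric}) precisely where invertibility of $W_q^\top\mathbb{J}_n W_q$ and the existence of $T$ are used.
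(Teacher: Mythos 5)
Your proposal is correct and follows essentially the same route as the paper: interpolation via Theorem~\ref{thm:interpol-quad}(i) because $W_q$ is a full-rank basis of ${\cal W}_q$, and physical realizability from the normalization $W_q^\top\mathbb{J}_n W_q=\mathbb{J}_r$ and the identities $\mathbb{J}_n W_q=V_q\mathbb{J}_r$, $W_q^\top\mathbb{J}_n=\mathbb{J}_r V_q^\top$, which turn \eqref{eq:pr-1}--\eqref{eq:pr-2} into the reduced-order conditions by congruence, with \eqref{eq:pr-3} inherited since $D_r=D$. The explicit CCR check for $x_r=W_q^\top x$ is a pleasant addition but does not change the argument.
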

\begin{proof}
First note that, because $W_q$ is a basis of the subspace ${\cal W}_q$ defined in \eqref{eq:Wsubspace-quad}, $W_q$ has full rank and \\ $\left(\mu_i^\dagger C(\sigma_i I - A)^{-1}\right)^\dagger \in {\rm range}(W_q)$. By Theorem \ref{thm:interpol-quad}(i), $\Xi_r(s)$ interpolates $\Xi(s)$ in the sense of \eqref{eq:interpol-cond-left}. 
 
Now we will show that the reduced-order system with the transfer function $\Xi_r(s)$ is physically realizable.
Let us first recall that $W_q = \widehat{W}_q T^{\top}$ where $T$ is an $2r\times 2r$ matrix such that 
$T (\widehat{W}_q^\top \mathbb{J}_n \widehat{W}_q)T^\top = \mathbb{J}_r$. Hence, we have that ${W}_q^\top \mathbb{J}_n {W}_q = \mathbb{J}_r$. It then follows from \eqref{eq:right-proj-quad} that $\mathbb{J}_n W_q = V_q \mathbb{J}_r$ and $W_q^\top \mathbb{J}_n = \mathbb{J}_r V_q^\top$.
Using these identities, pre- and post-multiplying the first physical realizability condition of the full-order system \eqref{eq:pr-1} by $W_q^\top$ and $W_q$, respectively, gives us
\begin{align*}
 W_q^\top A \mathbb{J}_n W_q + W_q^\top\mathbb{J}_n A^\top W_q + W_q^\top B\mathbb{J}_mB^\top W_q &= 0 \nonumber \\
 \iff A_r \mathbb{J}_r + \mathbb{J}_r A_r^\top + B_r\mathbb{J}_mB_r^\top &= 0. \label{eq:pr-reduced-quad-1}
\end{align*}

Similarly, pre-multiplying the second physical realizability condition of the full-order system \eqref{eq:pr-2} by $W_q^\top$ gives us that $\mathbb{J}_r C_r^{\top} +  B_r\mathbb{J}_mD^{\top} = 0$.
The theorem statement then follows because $D_r = D$.
\end{proof}

From the above theorem, we stress that the proposed projection matrices \eqref{eq:left-proj-quad}-\eqref{eq:right-proj-quad} lead to a physically realizable approximate system. Our proposed tangential interpolatory projection method can be applied to a large class of linear quantum stochastic systems compared to existing methods proposed in \cite{Nurd14,Pet11}, as will be demonstrated in examples. Moreover, the tangential interpolatory projection framework provides an appropriate reduction scheme when we are only interested in specific input-output responses of the system at a particular range of frequencies, 

Recall, for a subspace ${\cal H}$ of a vector space, that we write $P_{\cal H}$ to denote an orthogonal projection onto ${\cal H}$.
Let ${\cal X} \subset \mathbb{C}^{2n}$ and ${\cal Y} \subset \mathbb{C}^{2n}$.
We use $\phi\left( {\cal X},{\cal Y} \right)$ to denote the largest principal angle between the subspaces ${\cal X}$ and  ${\cal Y}$, which is defined as \cite{Szyld06}
\begin{equation}
 \cos \left( \phi\left( {\cal X},{\cal Y} \right) \right) \triangleq \inf_{\stackrel{x \in \mathcal{X},y \in \mathcal{Y}}{||x||=1,||y||=1}} \left| \langle x,y \rangle \right|.
 \label{eq:def-subspace-angle}
\end{equation}
Note that $\cos \left( \phi\left( {\cal X},{\cal Y} \right) \right) = \sqrt{1 - \|P_{\cal X} - P_{\cal Y}\|^2}$ \cite{Szyld06}.
We now present an error bound for the left tangential interpolatory projection method.

\begin{proposition} \label{prop:error-bound-left}
Consider a reduced-order model constructed using the subspace ${\cal W}_q$ defined in \eqref{eq:Wsubspace-quad}, and the full-rank projection matrices $W_q$ and $V_q$ as given in \eqref{eq:left-proj-quad} and \eqref{eq:right-proj-quad}, respectively. 
Let ${\cal U}_W(s) = \ker\{W_q^\top (sI-A)\}$ and ${\cal U}_V(s) = \ker\{V_q^\top (sI-A)^\dagger\}$.
For any $s \in \mathbb{C}$ that is not an eigenvalue of either $A$ or $A_r$, we have that
\begin{align}
\left\| \Xi(s) - \Xi_r(s) \right\| 
&= \left\| C(s  I-A)^{-1}  \left(I-Q(s)\right)  B \right\|,  \label{eq:left-err} \\
&= \left\| C\left( I - R(s)  \right) (sI-A)^{-1}  B\right\|, \label{eq:right-err}
\end{align} 
where $Q(s) = (sI-A) V_q (sI- A_r)^{-1} W_q^\top$ and $R(s) = V_q (sI-A_r)^{-1} W_q^\top(sI-A)$ are oblique projection operators (i.e., $Q(s)$ and $R(s)$ are idempotent), 
and when $A$ and $A_r$ have no poles on the right half plane and the imaginary axis, we have the following $\mathcal{H}_{\infty}$ bounds:
\begin{align}
 \left\| \Xi - \Xi_r \right\|_{\infty}
 &\leq \mathop{\sup}_{\omega \in \mathbb{R}} \left( \left({1 - \| P_{{\cal W}_q^\perp} - P_{{\cal U}_V(\imath\omega)} \|^2  }\right)^{-1/2} \right. \nonumber \\
 &\quad \left. \times \left\| C(\imath\omega I-A)^{-1} P_{{\cal W}_q^\perp} \right\|  \left\| P_{{\cal U}_V(\imath \omega)} B \right\|\right), \label{eq:left-err-bound} \\
 \left\| \Xi  - \Xi_r  \right\|_{\infty}  
 &\leq  \mathop{\sup}_{\omega \in \mathbb{R}} \left( \left({1 - \| P_{{\cal V}_q^\perp} - P_{{\cal U}_W(\imath\omega)} \|^2  }\right)^{-1/2} \right.  \nonumber \\
&\quad \left. \times \left\| C P_{{\cal U}_W(\imath \omega)} \right\| \left\| P_{{\cal V}_q^\perp} (\imath\omega I-A)^{-1} B  \right\|\right). \label{eq:right-err-bound} 
\end{align}
\end{proposition}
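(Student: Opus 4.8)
The plan is to prove the two exact identities \eqref{eq:left-err}--\eqref{eq:right-err} by elementary manipulation, and then read off the $\mathcal{H}_\infty$ bounds by recognizing $I-Q(s)$ and $I-R(s)$ as oblique projectors and bounding their action factor by factor. \emph{Exact error formulas.} Since $D_r=D$ the feedthrough terms cancel, so $\Xi(s)-\Xi_r(s)=C(sI-A)^{-1}B-C_r(sI-A_r)^{-1}B_r$, which with $A_r=W_q^\top AV_q$, $B_r=W_q^\top B$, $C_r=CV_q$ equals $C(sI-A)^{-1}B-CV_q(sI-A_r)^{-1}W_q^\top B$. Substituting the definitions of $Q(s)$ and $R(s)$ and cancelling $(sI-A)^{-1}(sI-A)$ shows that both $C(sI-A)^{-1}(I-Q(s))B$ and $C(I-R(s))(sI-A)^{-1}B$ collapse to exactly this expression, giving \eqref{eq:left-err}--\eqref{eq:right-err}. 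Idempotency of $Q(s)$ and $R(s)$ then follows from the single identity $W_q^\top(sI-A)V_q=sI-A_r$ (a consequence of $W_q^\top V_q=I$): e.g.\ $Q(s)^2=(sI-A)V_q(sI-A_r)^{-1}\big(W_q^\top(sI-A)V_q\big)(sI-A_r)^{-1}W_q^\top=Q(s)$, and similarly $R(s)^2=R(s)$.

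\emph{Geometry of the projectors.} Next I would identify the range and kernel of each projector. Using $W_q^\top V_q=I$ one checks that $Q(s)$ is the identity on $(sI-A)\,{\rm range}(V_q)$ and has kernel $\ker W_q^\top$, so $I-Q(s)$ is the oblique projector onto $\ker W_q^\top$ along $(sI-A)\,{\rm range}(V_q)$. Since $W_q$ (equivalently $\widehat W_q$) is a real basis of ${\cal W}_q$ we have $\ker W_q^\top={\cal W}_q^\perp$; and by the standard adjoint bookkeeping $\big((sI-A)\,{\rm range}(V_q)\big)^\perp=\ker\{V_q^\top(sI-A)^\dagger\}={\cal U}_V(s)$. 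In the same way $R(s)$ is the identity on ${\rm range}(V_q)$ with kernel $\ker\{W_q^\top(sI-A)\}={\cal U}_W(s)$, so $I-R(s)$ is the oblique projector onto ${\cal U}_W(s)$ along ${\rm range}(V_q)={\cal V}_q$.

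\emph{From the exact formulas to the $\mathcal{H}_\infty$ bounds.} For any oblique projector $\Pi$ onto a subspace ${\cal M}$ along ${\cal N}$ one has $\Pi=P_{\cal M}\,\Pi\,P_{{\cal N}^\perp}$; applying this to $\Pi=I-Q(\imath\omega)$ inside \eqref{eq:left-err} and using submultiplicativity of the spectral norm peels off $\|C(\imath\omega I-A)^{-1}P_{{\cal W}_q^\perp}\|$ on the left and $\|P_{{\cal U}_V(\imath\omega)}B\|$ on the right, leaving the factor $\|I-Q(\imath\omega)\|$. The norm of an oblique projector onto ${\cal M}$ along ${\cal N}$ equals $1/\cos\phi({\cal M},{\cal N}^\perp)=\big(1-\|P_{\cal M}-P_{{\cal N}^\perp}\|^2\big)^{-1/2}$ by \cite{Szyld06} together with \eqref{eq:def-subspace-angle}; with ${\cal M}={\cal W}_q^\perp$ and ${\cal N}^\perp={\cal U}_V(\imath\omega)$ this is precisely the prefactor in \eqref{eq:left-err-bound}. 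Taking the supremum over $\omega\in\mathbb{R}$, and using that $A$ and $A_r$ have no eigenvalues in the closed right half-plane so that $\|\Xi-\Xi_r\|_\infty=\sup_\omega\|\Xi(\imath\omega)-\Xi_r(\imath\omega)\|$, yields \eqref{eq:left-err-bound}. Running the identical argument on \eqref{eq:right-err} with $\Pi=I-R(\imath\omega)$, ${\cal M}={\cal U}_W(\imath\omega)$, ${\cal N}={\cal V}_q$ gives \eqref{eq:right-err-bound}.

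\emph{Main obstacle.} The computational content is routine; the two points needing care are (a) the adjoint bookkeeping in the geometry step, i.e.\ verifying $\big((sI-A)\,{\rm range}(V_q)\big)^\perp={\cal U}_V(s)$ and its analogue for $R(s)$, keeping $\top$ and $\dagger$ straight since $A,W_q,V_q$ are real while $s$ is complex and the subspaces live in $\mathbb{C}^{2n}$; and (b) quoting the oblique-projector norm identity in exactly the largest-principal-angle form \eqref{eq:def-subspace-angle}, together with the observation that $\dim{\cal M}=\dim{\cal N}^\perp$ here so that the identity indeed applies.
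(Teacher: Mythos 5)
Your proposal is correct and follows essentially the same route as the paper's proof: establish the exact residual identities algebraically, identify the range of $I-Q(s)$ (resp.\ $I-R(s)$) as ${\cal W}_q^\perp$ (resp.\ ${\cal U}_W(s)$) and the orthogonal complement of its kernel as ${\cal U}_V(s)$ (resp.\ ${\cal V}_q^\perp$), sandwich the oblique projector as $P_{\cal M}\,\Pi\,P_{{\cal N}^\perp}$, and invoke the secant-of-largest-principal-angle norm formula from \cite{Szyld06} before taking the supremum over the imaginary axis. Your write-up is in fact slightly more explicit than the paper's (idempotency via $W_q^\top(sI-A)V_q = sI-A_r$ and the adjoint bookkeeping are spelled out), but there is no substantive difference in approach.
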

\begin{proof}
Let us first recall that the range and kernel of a projection operator $P$ are complementary in the sense that $\ker\{ I-P \} = {\rm range}\{ P \}$ \cite{Szyld06}. 
Under our assumption on $s \in \mathbb{C}$, 
$\ker\{Q(s)\} = {\rm range}\{I - Q(s)\} = {\cal W}_q^\perp$ because $(sI-A) V_q (sI-A_r)^{-1}$ is full rank.
Note that 
$\ker\{ I-Q(s) \}^\perp = {\rm range}( I - Q(s)^\dagger) \}$ \cite[Theorem 6.6]{Brezis2011}. From this identity, the fact that $W_q^\top (sI-A_r)^{-\dagger}$ is full rank, and the definition of $V_q$ \eqref{eq:right-proj-quad}, we have that $\ker\{ I-Q(s) \}^\perp = \ker\{ Q(s)^\dagger \} = \ker\{ V_q^\top (sI-A)^\dagger \} 
= {\cal U}_V(s)$.

From the above identities, we have that $I-Q(s) = P_{{\cal W}_q^\perp} \left(I-Q(s)\right) P_{{\cal U}(s)}$.
Thus, we have that
\begin{align*}
 &\left\| \Xi(s) - \Xi_r(s) \right\| \nonumber \\
 &= \left\| C(sI-A)^{-1} \left[ I - (sI-A)V_q (sI-A_r)^{-1} W_q^\top  \right] B\right\| \nonumber \\
 &= \left\| C(sI-A)^{-1} \left(I-Q(s)\right) B\right\| \nonumber \\
 &= \left\| C(sI-A)^{-1} P_{{\cal W}_q^\perp} \left(I-Q(s)\right) P_{{\cal U}_V(s)} B\right\|.
\end{align*}
This establishes \eqref{eq:left-err}. 
The result \eqref{eq:left-err-bound} then follows from the definition of the ${\cal H}_{\infty}$ norm and that $||I-Q(s)|| = \sec\left( {\rm range}\{I-Q(s)\}, \ker\{ I-Q(s) \}^\perp \right) = \left({1 - \| P_{{\cal W}_q^\perp} - P_{{\cal U}(\imath\omega)} \|^2  }\right)^{-1/2}$ \cite[Theorem 6.1]{Szyld06}.
Finally, \eqref{eq:right-err} and \eqref{eq:right-err-bound} follow from similar arguments to the above. This establishes the proposition statement.
\end{proof}

Note that for any $s \in \mathbb{C}$, 
$P_{{\cal W}_q^\perp}$, $P_{{\cal V}_q^\perp}$, $P_{{\cal U}_W(s)}$, and $P_{{\cal U}_V(s)}$ can be computed from \\ $\{\mu_1,\mu_2,\ldots,\mu_{2r}\}$, $\{\sigma_1,\sigma_2,\ldots,\sigma_{2r}\}$, $\mathbb{J}_n$, $A$, and $C$. 
Thus, the bound \eqref{eq:left-err-bound} can be obtained without computing $W_q$, $V_q$, and the reduced system model.

Moreover, we note that the two ${\cal H}_\infty$ bounds, \eqref{eq:left-err-bound} and \eqref{eq:right-err-bound}, are established on the basis of the operator norms of two different oblique projection operators $Q(s)$ and $R(s)$. 
Since they are oblique projection operators, their operator norms may be different and large (unlike orthogonal projections). Thus, the two upper bounds can be different and one might be tighter than the other, depending on $\{\mu_1,\mu_2,\ldots,\mu_{2r}\}$, $\{\sigma_1,\sigma_2,\ldots,\sigma_{2r}\}$, $\mathbb{J}_n$, $A$, and $C$.

\begin{remark} \label{remark:real-basis}
As we require $W_q$ and $V_q$ to be real-valued matrices, the interpolation points cannot be any arbitrary complex numbers 
and the tangent directions cannot be any arbitrary complex vectors. In fact, a real basis of the subspace ${\cal W}_q$ exists when the interpolation points, $\sigma_1,\ldots,\sigma_{2r}$, and their corresponding tangent directions, $\mu_1,\ldots,\mu_{2r}$, are real or occur in conjugate pairs; see also \cite[Remark 4]{GPBV12}.
\end{remark}

\subsubsection{Right Tangential Interpolatory Projection}

The right tangential interpolatory projection follows similar construction to the left tangential interpolatory projection presented previously. It involves a different subspace which is defined as
\begin{align*}
 {\cal V}_q 
 &\triangleq {\rm span}\left\{ (\sigma_iI-A)^{-1}B\nu_i \right\}_{i = 1,2,\ldots,2r}
\end{align*}
where interpolation points $\sigma_1,\sigma_2,\ldots,\sigma_{2r}$ and the right tangent directions $\nu_1,\nu_2,\ldots,\nu_{2r}$ are chosen such that: 
(i) $(\sigma_i I - A)$ is invertible for all $i = 1,2,\ldots,2r$, 
(ii) $\nu_i \neq 0$ for all $i = 1,2,\ldots,2r$,
(iii) the subspace ${\cal V}_q$ has the dimension of $2r$, and
(iv) a real basis $\widehat{V}_q$ of ${\cal V}_q$ exists and $\widehat{V}_q^\top \mathbb{J}_n \widehat{V}_q$ has full rank. 
We then propose  $V_q$ and $W_q$ to be 
\begin{align}
 V_q &= \widehat{V}_q T^{\top} \label{eq:right-proj-quad2} \\
 W_q &= \mathbb{J}_n V_q \left(V_q^\top \mathbb{J}_n V_q\right)^{-1} \label{eq:left-proj-quad2}
\end{align}
where $T \in \mathbb{R}^{2r\times 2r}$ is a non-singular transformation matrix such that $T \mathbb{J}_r T^\top = (\widehat{V}_q^\top \mathbb{J}_n^\top \widehat{V}_q)^{-1}$ and $V_q$ is a real basis of ${\cal V}_q$. 

\begin{theorem} \label{thm:pr-reduced-quad2}
Given a physically realizable full-order linear quantum stochastic system written in quadrature form \eqref{eq:qsde-out-quad} with the transfer function $\Xi(s)$,
consider a reduced-order model of the form \eqref{eq:qsde-quad-reduced} with the projection matrices $W_q$ and $V_q$ given by \eqref{eq:right-proj-quad2} and \eqref{eq:left-proj-quad2}, respectively. Then the reduced-order system is also physically realizable and its transfer function $\Xi_r(s)$ interpolates $\Xi(s)$ in the sense of \eqref{eq:interpol-cond-right}.
\end{theorem}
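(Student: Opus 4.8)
The plan is to follow the two-part structure of the proof of Theorem~\ref{thm:pr-reduced-quad}, with the roles of the left and right interpolation data interchanged. First I would settle the interpolation claim: since $T$ is non-singular, $V_q=\widehat{V}_q T^{\top}$ is a real basis of ${\cal V}_q$, so it has full rank and $(\sigma_i I-A)^{-1}B\nu_i\in{\rm range}(V_q)$ for every $i=1,2,\ldots,2r$. Theorem~\ref{thm:interpol-quad}(ii) then gives $\widehat{\Xi}(\sigma_i)\nu_i=\widehat{\Xi}_r(\sigma_i)\nu_i$, and because $D_r=D$ this is equivalent to the full right interpolation condition \eqref{eq:interpol-cond-right}.

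The substantive part is physical realizability, and the single identity that drives it is $V_q^{\top}\mathbb{J}_n V_q=\mathbb{J}_r$. I would establish this from $V_q=\widehat{V}_q T^{\top}$ and the defining property of $T$ --- whose existence is guaranteed by Lemma~\ref{lem:skewSymmetric} once Condition~(iv) holds, since $\widehat{V}_q^{\top}\mathbb{J}_n\widehat{V}_q$ is then a full-rank real skew-symmetric matrix --- in exactly the way $W_q^{\top}\mathbb{J}_n W_q=\mathbb{J}_r$ was obtained in the left-tangential case. From here I would unpack $W_q=\mathbb{J}_n V_q\bigl(V_q^{\top}\mathbb{J}_n V_q\bigr)^{-1}=\mathbb{J}_n V_q\mathbb{J}_r^{-1}$ and use $\mathbb{J}_n^{2}=-I$ and $\mathbb{J}_r^{-1}=-\mathbb{J}_r$ to obtain the bridging relations $\mathbb{J}_n W_q=V_q\mathbb{J}_r$, $W_q^{\top}\mathbb{J}_n=\mathbb{J}_r V_q^{\top}$, and $W_q^{\top}V_q=I$; these are precisely the identities used in the proof of Theorem~\ref{thm:pr-reduced-quad}, and the last of them is the Petrov--Galerkin compatibility condition.

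With those relations available, the remaining calculation is verbatim the left-tangential argument. Pre- and post-multiplying \eqref{eq:pr-1} by $W_q^{\top}$ and $W_q$ and substituting $\mathbb{J}_n W_q=V_q\mathbb{J}_r$ and $W_q^{\top}\mathbb{J}_n=\mathbb{J}_r V_q^{\top}$ collapses it to $A_r\mathbb{J}_r+\mathbb{J}_r A_r^{\top}+B_r\mathbb{J}_m B_r^{\top}=0$; pre-multiplying \eqref{eq:pr-2} by $W_q^{\top}$ gives $\mathbb{J}_r C_r^{\top}+B_r\mathbb{J}_m D^{\top}=0$, which is the reduced form of \eqref{eq:pr-2} because $D_r=D$; and \eqref{eq:pr-3} for the reduced system is immediate from $D_r=D$ together with $D\mathbb{J}_m D^{\top}=\mathbb{J}_\ell$ for the full-order system. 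Hence the reduced system satisfies \eqref{eq:pr-1}--\eqref{eq:pr-3} and is physically realizable, which with the first paragraph completes the proof.

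I expect the only delicate point to be the derivation of $V_q^{\top}\mathbb{J}_n V_q=\mathbb{J}_r$: one has to keep track of the signs produced by $\mathbb{J}_n^{\top}=-\mathbb{J}_n$, $\mathbb{J}_r^{\top}=-\mathbb{J}_r$ and by the matrix inversion in the definition of $W_q$, so that the skew-symmetric matrix $\widehat{V}_q^{\top}\mathbb{J}_n\widehat{V}_q$ is normalized to $\mathbb{J}_r$ rather than to $-\mathbb{J}_r$ or $\mathbb{J}_r^{-1}$. Getting this normalization exactly right is what makes the reduced physical realizability conditions emerge with the correct signs; everything downstream is a transcription of the argument already given for Theorem~\ref{thm:pr-reduced-quad}.
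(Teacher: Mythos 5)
Your proposal is correct and is exactly the argument the paper intends: the paper's own proof of Theorem~\ref{thm:pr-reduced-quad2} is a one-line appeal to the proof of Theorem~\ref{thm:pr-reduced-quad} together with Theorem~\ref{thm:interpol-quad}(ii), which is precisely what you have written out. On the ``delicate'' normalization you flag: note that with the paper's literal condition $T\mathbb{J}_r T^\top=(\widehat{V}_q^\top\mathbb{J}_n^\top\widehat{V}_q)^{-1}$ and $V_q=\widehat{V}_q T^\top$ one actually obtains $T^\top(\widehat{V}_q^\top\mathbb{J}_n\widehat{V}_q)T=\mathbb{J}_r$ rather than the needed $V_q^\top\mathbb{J}_n V_q=T(\widehat{V}_q^\top\mathbb{J}_n\widehat{V}_q)T^\top=\mathbb{J}_r$ (a transpose is misplaced in the paper), so you should simply choose $T$ via Lemma~\ref{lem:skewSymmetric} so that $T(\widehat{V}_q^\top\mathbb{J}_n\widehat{V}_q)T^\top=\mathbb{J}_r$, exactly as in the left-tangential case (equivalently, set $V_q=\widehat{V}_q T$ under the paper's condition); with that identity in hand the rest of your argument goes through verbatim.
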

\begin{proof}
This proof follows similar argument to the proof of Theorem \ref{thm:pr-reduced-quad} using the result of Theorem \ref{thm:interpol-quad}(ii).
\end{proof}

\subsection{Selection of interpolation points and tangent directions} \label{subsec:interpol-selection}

In this subsection, we  propose a heuristic to select interpolation points and left tangent directions for our reduced-order linear quantum stochastic system. Note that we will only consider the left tangential interpolatory projection approach. However, similar approaches can be undertaken for the right tangential interpolatory projection approach. 

For the left tangential interpolatory projection, let us assume that the output fields $y(t)$ can be re-ordered from the most important pair (of momentum and position operators) to the least through a permutation matrix $\Pi_y$. We now present an approach to heuristically choose left tangent directions so that the more important output fields are matched at a larger (or at least equal) number of frequencies. Recall that $e_i = [0,0,\ldots,0,1,0,\ldots,0]^\top \in \mathbb{C}^{2\ell}$ denotes an indicator vector with one in the $i$-th position and zero elsewhere. Also let $E_k = (e_1,e_1,e_2,e_2,\ldots,e_{k},e_{k})$ for any $k \in \mathbb{N}$.
We propose the tangent directions to be
\begin{align}
 &(\mu_1,\mu_2,\ldots,\mu_{2r})   \nonumber \\
 &= \Pi_y^\top \times \left\{ 
 \begin{array}{cl}
   E_r, & \textnormal{if } r \leq \ell, \\
  (E_\ell,\ldots,E_\ell,e_1,e_1,e_2,e_2,\ldots), & \textnormal{otherwise}.
 \end{array} \right.
 \label{eq:chosen-direction}
\end{align}
Here, the tangent directions are chosen in conjugate pairs to ensure that a real basis of ${\cal W}_q$ exists.

It now remains to choose the interpolation points. In this approach, we are interested in matching frequency responses of the full-order system. Therefore, we propose that the set of interpolation points be along the imaginary axis of the form
\begin{equation}
 (\sigma_1,\sigma_2,\ldots,\sigma_{2r}) = (\imath \omega^c_1, -\imath \omega^c_1,\imath \omega^c_2, -\imath \omega^c_2,\ldots,\imath \omega^c_r, -\imath \omega^c_r)
 \label{eq:chosen-points}
\end{equation} 
where $\omega^c_1, \ldots,\omega^c_r \geq 0$. Again, the interpolation points are chosen in conjugate pairs to ensure that a real basis of ${\cal W}_q$ exists. 
For the chosen tangent directions and any $\Omega = (\omega_1,\omega_2,\ldots,\omega_{r}) \in \mathbb{R}^{r}$ (such that the conditions under \eqref{eq:Wsubspace-quad} hold), 
let $W(\Omega)$ be an orthonormal basis of the subspace 
$${\rm span}\left\{ (\imath\omega_j I \scalebox{0.75}[1]{\( - \)} A^\dagger)^{-1}C^\dagger \mu_k, (-\imath\omega_j I \scalebox{0.75}[1]{\( - \)} A^\dagger)^{-1}C^\dagger\mu_{k+1} \right\}_{j=1,2,\ldots,r}$$ where $k = 2j-1$. Let $V(\Omega) = \mathbb{J}_n W(\Omega) (W(\Omega)^\dagger\mathbb{J}_n W(\Omega))^{-1}$ and \\
$\tilde{Q}(s,\Omega) = (sI-A) V(\Omega) (sI- W(\Omega)^\dagger A V(\Omega) )^{-1} W(\Omega)^\dagger$.

\subsubsection{$\mathcal{H}_\infty$ based selection of interpolation points}

Since we are interested in matching frequency responses of the full-order system, it is natural to attempt to minimize the error between $\Xi(s) - \Xi_r(s)$ when $s$ is purely imaginary (i.e., minimize error across all frequencies).
Thus, we define a cost function on the basis of the ${\cal H}_\infty$ norm based on the error formulae \eqref{eq:left-err} as
\begin{align*}
 {\cal J}_\infty(\Omega) 
 &= \sup_{\omega \in \mathbb{R}}
       \left\| C(\imath\omega I-A)^{-1} \left(I-\tilde{Q}(\imath\omega,\Omega)\right) B\right\|.
\end{align*}
The vector of interpolation points $\Omega^c = (\omega^c_1,\omega^c_2,\ldots,\omega^c_r)$ is then chosen such that it is  a local   minimizer of ${\cal J}_{\infty}(\Omega)$.

\subsubsection{$\mathcal{H}_2$ based selection of interpolation points}

For some quantum systems, it is possible that the cost ${\cal J}_{\infty}(\Omega)$ is the same for all $\Omega \in \mathbb{R}^r$; see e.g., Example \ref{ex:passive2}.  In such cases, it may be more appropriate to consider an optimization problem based on ${\cal H}_2$ error.
Let $E(\omega,\Omega) = C(\imath\omega I-A)^{-1} \left(I-\tilde{Q}(\imath\omega,\Omega)\right)B$
Let us define a cost function on the basis of the ${\cal H}_2$ norm based on the error formulae \eqref{eq:left-err} as
\begin{align*}
 {\cal J}_2(\Omega) = \int_{-\infty}^{\infty} {\rm trace}\left( E(\omega,\Omega)^\dagger E(\omega,\Omega) \right) d\omega.
\end{align*}
The vector of interpolation points $\Omega^c = (\omega^c_1,\omega^c_2,\ldots,\omega^c_r)$ is then chosen such that it is a local minimizer of ${\cal J}_{2}(\Omega)$.

\subsection{Illustrative examples (active systems)}

In this subsection, we present some practical examples illustrating the application of our proposed tangential interpolatory projection approach on active linear quantum stochastic systems (having some squeezing). Particularly, the systems considered in these examples cannot be truncated through the existing quasi-balanced truncation method \cite{Nurd14} because the controllability and observability Gramians of the full-order systems do not satisfy the commutation condition for the existence of a symplectic balancing transformation matrix. 

\begin{example} \label{ex:BAE}
Consider an optomechanical system proposed in \cite{WC13} for back-action evading measurement of position, consisting of an optical cavity with two movable mirrors, conceptually illustrated in Fig.~\ref{fig:BAE1-model}. For an introduction to optomechanical systems we refer the reader to \cite{CM14,WM11}, while for studies of control on such systems  see, e.g., \cite{HM12,JNSJ15}. The cavity is pumped by a strong coherent laser and each mirror is subjected to radiation pressure and thermal noise. This system has three degrees of freedom comprising an oscillator inside the cavity, described by the quadratures $(q_1,p_1)$, and two mechanical oscillators from the motion of the two mirrors described by the quadratures $(q_2,p_2,q_3,p_3)$.
The dynamics of this optomechanical system can be linearized about the steady-state value of the mean amplitude of the cavity mode, see, e.g., \cite{CM14,WM11} for details. The linear approximation can be described in the quadrature form \eqref{eq:qsde-out-quad} with $x = (q_1,p_1,q_2,p_2,q_3,p_3)^\top$ and the following system matrices \cite{WC13}:
\begin{align*}
 A &= \left[\begin{array}{cccccc}
       -\frac{\kappa}{2} & 0 & 0 & 0 & 0 & 0 \\
       0 & -\frac{\kappa}{2} & -\Gamma & 0 & 0 & 0 \\
       0 & 0 & -\frac{\gamma}{2} & 0 & 0 & \Omega_b \\
       -\Gamma & 0 & 0 & -\frac{\gamma}{2} & -\Omega_b & 0 \\
       0 & 0 & 0 & \Omega_b & -\frac{\gamma}{2} & 0 \\
       0 & 0 & -\Omega_b & 0 & 0 & -\frac{\gamma}{2}
      \end{array}\right],
\end{align*}
$B = {\rm diag}(\sqrt{\kappa}I_2, \sqrt{\gamma}I_4)$, $C = \left[\sqrt{\kappa}I_2 \quad 0_{2\times 4}\right]$, and $D = \left[-I_2 \quad 0_{2\times 4}\right]$, where $\kappa > 0$ is the cavity decay rate, $\gamma > 0$ is the damping rate of the two mechanical oscillators, $\Gamma > 0$ is the optomechanical coupling rate (due the coupling between the mirror degrees of freedom and the cavity mode via radiations pressure), and $\Omega_b$ is the system half-bandwidth. 
Here, inputs 1-2 are the two quadratures of the laser field while inputs 3-6 describe the thermal fluctuations acting on the mirrors. Note that this system is active as the coupling $\Gamma$ leads to a squeezing Hamiltonian.

\begin{figure}[!t]
\centering
\includegraphics[width=0.45\textwidth]{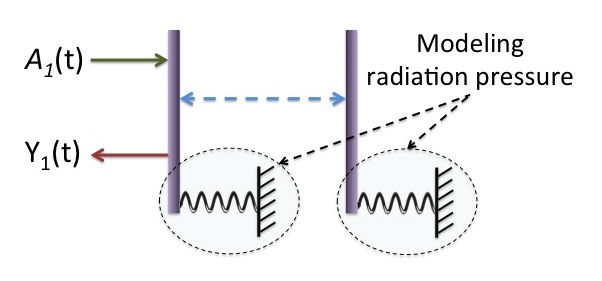}
\caption{Example \ref{ex:BAE}: Conceptual model of the optomechanical system. }
\label{fig:BAE1-model}
\end{figure}

Let $\kappa = 2\times 10^5$, $\gamma = 100$, $\Gamma = 7.0711 \times 10^4$, and $\Omega = 10^4$. 
The poles of the system are then at $-50 \pm 10^4\imath$ and $-10^5$.
This choice of parameter values allows us to compare our model reduction method with the singular perturbation approximation presented in \cite{VP11}.

Consider an approximation with $r=2$. We are interested in matching frequency responses from the thermal fluctuations on the mechanical oscillators to the quadratures of the output field $Y_1(t)$. We apply the right tangential interpolatory projection model reduction approach to provide different weights on the inputs. As proposed in \eqref{eq:chosen-direction}, we choose the tangent directions to be 
$(\nu_1,\nu_2,\nu_3,\nu_4) = (e_5,e_5,e_6,e_6)$. The corresponding interpolation points of the form \eqref{eq:chosen-points} are obtained by finding a local minimizer of the ${\cal H}_\infty$ based optimization problem. For computational simplicity, we set $\omega^c_1 = \omega^c_2 = \omega^c$. With this assumption, we find $\omega^c = 1.05\times 10^4$ as a local minimizer. We note that this $\omega^c$ leads to a stable reduced system whose poles are at $-50 \pm 10^4\imath$.

\begin{figure}[!t]
\centering
\includegraphics[width=0.6\textwidth]{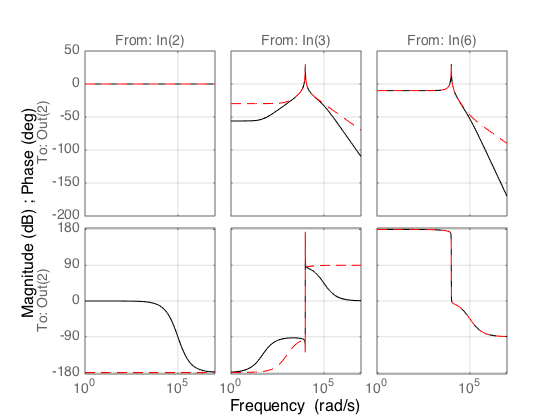}
\caption{Example \ref{ex:BAE}: The frequency responses (Bode plot) of the full-order linear quantum stochastic system (black solid line) and the tangential interpolatory projection approximation (red dashed line). The center frequency (where the peaks appear) is the average resonant frequencies of the two mechanical oscillators, i.e., $(\omega_{m1}+\omega_{m2})/2$.}
\label{fig:BAE1-1}
\end{figure}
\begin{figure}[!t]
\centering
\includegraphics[width=0.6\textwidth]{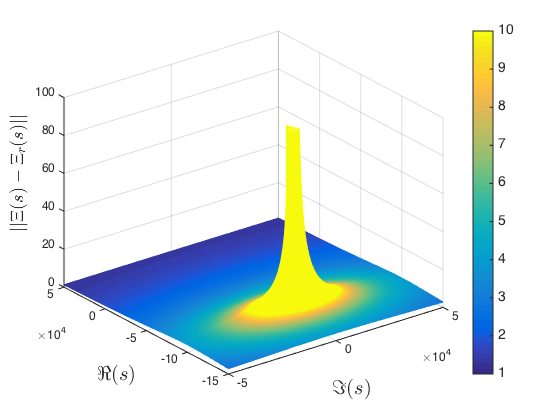}
\caption{Example \ref{ex:BAE}: The approximation error $\|(\Xi(s)-\Xi_r(s))\|$.}
\label{fig:BAE1-err}
\end{figure}
\begin{figure}[!t]
\centering
\includegraphics[width=0.5\textwidth]{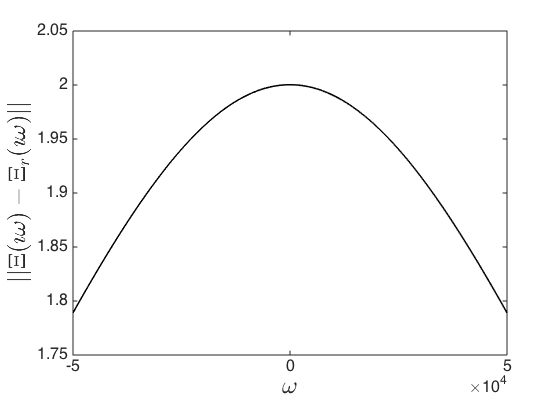}
\caption{Example \ref{ex:BAE}: The approximation error at each purely imaginary point (or frequency) $\|(\Xi(\imath\omega)-\Xi_r(\imath\omega))\|$.}
\label{fig:BAE1-err2}
\end{figure}

Fig. \ref{fig:BAE1-1} illustrates the frequency responses of the original and our reduced systems at output 2, using black solid and red dashed lines, respectively. Note that the responses from inputs 1, 4, and 5 are omitted as the transfer functions from these inputs are zero due to the dispersive coupling. Similarly, the responses at output 1 are also omitted because, for both systems, the transfer functions from inputs 2-6 are zero and the ones from input 1 are the same as those at output 2 from input 2 (shown in the figure).
From the figure, we see that the responses of the two systems are well matched over the narrow bandwidths ($2\Omega_b$) of the forces (inputs 3 and 6) acting on the mechanical oscillators. The magnitude responses from the pump beam (input 2) of the two systems are the same, whilst the phase response of our reduced system matches that of the original system at high frequencies. 
The error $\|(\Xi(s)-\Xi_r(s))\|$ for each $s$ is shown in Fig. \ref{fig:BAE1-err} and \ref{fig:BAE1-err2}. Note that the tangential error is large when $s$ is close to the poles of the systems. The ${\cal H}_\infty$ error of the reduced system is $2.00$. The ${\cal H}_\infty$ bounds \eqref{eq:left-err-bound} and \eqref{eq:right-err-bound} are $2.45$ and $3.96\times 10^3$, respectively.
Thus, we see that the bound \eqref{eq:left-err-bound} is tight, while the bound \eqref{eq:right-err-bound} is too conservative for this example.

For comparison, singular perturbation approximation \cite{VP11} is applied to eliminate the cavity oscillator $(q_1,p_1)$ and the ${\cal H}_\infty$ error of the singular perturbation approximation is $4.45$. Hence, our model reduction method provides a better reduced model than the singular perturbation approximation (in terms of ${\cal H}_\infty$ error) for this optomechanical example with the given set of parameter values.

\end{example}

\begin{example} \label{ex:control}
Consider a quantum control system originally considered in \cite[Section IV]{SP12}, comprising of a cascade connection of an optical parametric oscillator (OPO), two optical cavities, and a stabilizing linear quantum controller. Let $u_q(t)$ and $u_p(t)$ denote the momentum and position quadratures of the output of the stabilizing controller, respectively. The dynamics of the overall control system is described by
\begin{align*}
 dx(t) &= Ax(t) dt + B_u du(t) + Bdw(t) \nonumber \\
 dy(t) &= Cx(t) dt + D dw(t)
\end{align*}
where $du(t) = [u_q(t),u_p(t)]^\top dt + dv(t)$ for some vacuum quantum noise vector $v(t)$ different from $w(t)$ originating from the controller output field, and
\small
\begin{align*}
 A &= \left[\begin{array}{ccc} -0.5006I_2 & -0.0374I_2 & -0.0410I_2 \\ 0 & A_{22} & -1.0954I_2 \\ 0 & 0 & -0.6I_2  \end{array}\right], \\
 A_{22} &= \left[\begin{array}{cc} -1 & -1.05 \\ -1.05 & -1 \end{array}\right], \\
 B_u &= \left[\begin{array}{ccc} -0.0374I_2 & -I_2 & -1.0954I_2 \end{array}\right]^\top, \\
 C &= \left[\begin{array}{cc} I_2 & 0_{2\times 4} \end{array}\right]^\top, \quad
 D = \left[\begin{array}{cc} I_2 & 0_{2\times 2} \end{array}\right]^\top.
\end{align*}
\normalsize
Here, the plant is unstable (the eigenvalues of $A$ are $0.05$, $-0.05$, $-0.5007$, and $-2.05$).

To stabilize the plant, we first design a classical (non-quantum) output feedback controller using the pole placement method. We select the closed-loop poles to be at $-0.2$, $-0.3$, $-0.5$, $-0.6$, $-0.9$, $-1.5$, $-1.8$. The controller dynamics is then described by
\begin{align*}
 dz(t) &= A_c z(t) dt + B_c dy(t) \nonumber \\
 du(t) &= C_c z(t) dt
\end{align*}
where
{\small \arraycolsep=2pt\def\arraystretch{1}
\begin{align*}
 A_c &\hspace{-0.3em}=\hspace{-0.3em} \left[\hspace{-0.3em}\begin{array}{rrrrrr}
    -1.5500 & -0.0001 &  -0.0016 &  0.0000 &  -0.0139 &   0.0000 \\
    -0.0052  & -2.4500 &   0.0000  & -0.0315  &  0.0000 &  -0.0447 \\
    -10.3270 &  -0.0022 &  -1.0920 &  -0.0002 &  -0.3718 &   0.0004 \\
     0.2787 &  39.3723  &  0.0003 &   0.2090 &   0.0001  & -1.1943 \\
    17.5312 &   0.0014 &   1.0494 &  -0.0002  &  0.1927  &  0.0005 \\
    -0.0207  &  0.0007 &   0.0003 &   0.1741 &   0.0002 &  -0.7083
 \end{array}\hspace{-0.3em}\right]\hspace{-0.3em}, \\
 B_c &\hspace{-0.3em}=\hspace{-0.3em} \left[\hspace{-0.3em}\begin{array}{rr}
     1.0493 &   0.0001 \\
         0.0052  &  1.9493 \\
        10.3276  &  0.0022 \\
        -0.2787 & -39.3717 \\
       -17.5305 &  -0.0014 \\
         0.0207  &  0.0000
  \end{array}\hspace{-0.2em}\right]\hspace{-0.3em}, \quad
 C_c \hspace{-0.2em}=\hspace{-0.3em} \left[\hspace{-0.2em}\begin{array}{rr}
     -0.0006   & 0.0000 \\
        -0.0000 &   -0.0007 \\
        -0.9580 &  -0.0003 \\
         0.0002  & -0.1590 \\
        -0.7236 &  -0.0001 \\
        -0.0004  &  0.0989
  \end{array}\hspace{-0.2em}\right]^\top\hspace{-0.3em}.
\end{align*}}
This type of controller can be made physically realizable as a linear quantum system with six degrees of freedom by introducing seven additional inputs $v(t)$ to the controller \cite[Lemma 5.6]{JNP06}. That is, there exists a matrix $B_v$ such that the system described by 
\begin{align*}
 dz(t) &= A_c z(t) dt + [B_v \quad B_c] [dv(t)^\top \quad dy(t)^\top]^\top \nonumber \\
 du(t) &= C_c z(t) dt + [I_2 \quad 0] [dv(t)^\top \quad dy(t)^\top]^\top 
\end{align*}
is physically realizable (as a quantum system). Using the formula given in \cite[Lemma 5.6]{JNP06}, $B_v \in \mathbb{R}^{6\times 14}$ is given by
{\small \arraycolsep=2pt\def\arraystretch{1}
\begin{align*}
 \left[\hspace{-0.3em}\begin{array}{rrrrrr}
     0.0007  &  0.0000  &  0.1590 &  -0.0003 &  -0.0989 &  -0.0001 \\ 
       -0.0000  &  0.0006 &   0.0002  &  0.9580 &  -0.0004  &  0.7236 \\
       -0.0328 &   0.0000 &   0.1957  & -0.0012 &  -0.3337 &   0.0004 \\
       -0.0000 & -29.6894  & -0.0002 &  -0.1134  &  0.0003  &  0.0001 \\
             0  & -0.0328 &  -0.0000  & -0.0387 &   0.0000 &  -0.0002 \\
       29.6921  &  0.0000 &  -0.0056  &  0.0000  &  0.0096 &  -0.0000 \\
       -0.0387  &  0.0012 &  -8.1505  & -0.0000 &  13.8054 &  -0.0163 \\
       -0.0000 &  -0.1134 &  -0.0001 & -24.9918 &  -0.0001 &  -0.0027 \\
        0.0000  &  0.1957 &  -0.0000  & -8.1505  &  0.0005 &  -0.0006 \\
       -0.0056 &   0.0002 &  28.4766 &   0.0001 &   2.0589 &  -0.0024 \\
       -0.0002 &  -0.0004 &  -0.0006  &  0.0163 &  -0.0112  &       0 \\
        0.0000  &  0.0001  &  0.0024  & -0.0027 &  -0.0041 & -29.6921 \\
       -0.0000 &  -0.3337 &  -0.0005  & 13.8054 &   0.0000 &  -0.0112 \\
        0.0096 &  -0.0003  &  2.0589  &  0.0001 &  26.2046  &  0.0041 \\
 \end{array}\hspace{-0.2em}\right]\hspace{-0.3em}.
\end{align*}}
Note that $B_v$ does not affect the location of the poles of the closed-loop system; see \cite[Eq. (24)]{JNP06}.
Also, note that the controller is stable with the poles at $-0.3837$, $-0.4597\pm -0.6664\imath$, $-0.7930$, and $-1.2726$, $-2.0299$.

In this example, we are interested in approximating the controller described by \\
$(A_c,[B_v \ B_c],C_c,[I_2 \ 0])$ with $r = 2$. To place more emphasis on the input $dy(t)$ of the controller (output of the plant), we consider the right tangential interpolatory projection method.  We choose $(\nu_1,\nu_2,\nu_3,\nu_4) = (e_{15},e_{15},e_{16},e_{16})$. 
The corresponding interpolation points of the form \eqref{eq:chosen-points} are obtained from a local minimizer of the ${\cal H}_{\infty}$ optimization problem described in the previous subsection. Again, we simplify the problem by assuming that $\omega^c_1 = \omega^c_2 = \omega^c$. With this assumption, we have that $\omega^c = 0.2900$. This $\omega^c$ leads to a stable reduced controller with the poles at $-0.2576\pm 1.4795\imath$, $-0.5391$, and $-1.4958$.

\begin{figure}[!t]
\centering
\includegraphics[width=0.6\textwidth]{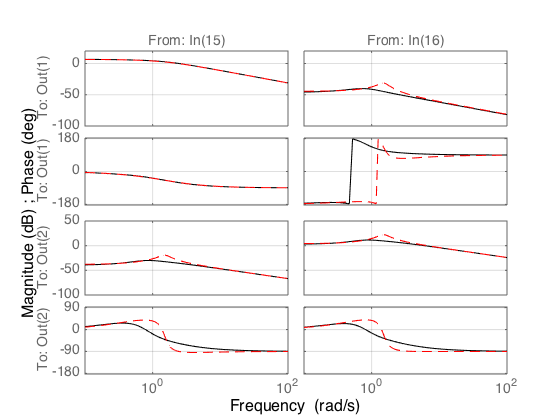}
\caption{Example \ref{ex:control}: The frequency responses (Bode plot) of the full-order controller (black solid line) and the tangential interpolatory projection approximation (red dashed line) from inputs 15 and 16 (i.e., output of the plant $y(t)$).}
\label{fig:control-ex}
\end{figure}

Fig. \ref{fig:control-ex} illustrates the frequency responses of the original controller and the tangential interpolatory projection approximation from the input $y(t)$. We see that the responses at output 1 from input 15 is closely matched across all frequencies. The other pairs of input-output responses from the quadratures in $y(t)$ to quadratures of the controller output $u(t)$ are also quite well matched except around the frequency of 0-2 rad/s. The closed-loop system with the approximate controller has poles at $-0.1265\pm 0.1404\imath$, $-0.3272$, $-0.3654\pm 1.5331\imath$, $-0.5821$, $-0.6220$, $-0.7143$, and $-1.7610 \pm 0.1907\imath$. That is, the approximate reduced quantum controller with four degrees of freedom also stabilizes the closed-loop system.
\end{example}

\section{Model Reduction for Completely Passive Linear Quantum Stochastic Systems}
\label{sec:passive}

Completely passive linear quantum stochastic systems are those which can be physically implemented using only passive components (that do not require external sources of quanta or energy); for example in quantum optics, these components are optical cavities, beam splitters, and phase shifters.
In \cite{TN15}, an asymptotically stable linear quantum stochastic system, in the quadrature form \eqref{eq:qsde-out-quad}, is shown to be completely passive if and only if its controllability Gramian $P$ is identity, i.e., $P = I$. The proposed choices of projection matrices $W_q$ and $V_q$ in the previous section does not guarantee that the controllability Gramian $P_r$ of the reduced system in the quadrature form \eqref{eq:qsde-quad-reduced} is $P_r = I$ (when the reduced system is asymptotically stable). 
For this reason, we will now present a model reduction method for completely passive linear quantum stochastic systems, which guarantees both passivity and physical realizability properties of the reduced system, by appealing to the system dynamics described by annihilation operators.

We begin by noting that each pair of position and momentum operators $(q_j,p_j)$ can be associated with a mode (or annihilation operator) as $a_j = (q_j+\imath p_j)/2$, which serves as a quantum analogue of the amplitude of a harmonic oscillator.
Let $a = (a_1,a_2,\ldots,a_n)^\top$ denote a vector of annihilation operators of a $n$ degree of freedom quantum harmonic oscillators.
The dynamics of a linear quantum stochastic system can be completely represented in the following annihilation-operator form if and only if the system is completely passive \cite{GGY08,GZ15}:
\begin{align}
 da(t) &= F a(t) dt + G d{\cal A}(t); \quad a(0) = a \nonumber \\
 dY(t) &= H a(t) dt + K d{\cal A}(t), \label{eq:qsde-anni}
\end{align}
where $F \in \mathbb{C}^{n\times n}$, $G \in \mathbb{C}^{m\times n}$, $H \in \mathbb{C}^{\ell \times n}$, and $D \in \mathbb{C}^{\ell \times m}$.
Here, $Y(t)$ and ${\cal A}(t)$ are as defined in Section \ref{sec:prelim}.

We will now present the physical realizability conditions of a linear quantum stochastic system in annihilation-operator form \eqref{eq:qsde-anni}. Similar to the earlier works  \cite{Nurd14,JNP06}, when the number of outputs is less than inputs ($\ell < m$), we say that a completely passive linear quantum stochastic system is physically realizable if and only if there exists $\hat{H} \in \mathbb{C}^{(m-\ell)\times n}$ and $\hat{K} \in \mathbb{C}^{(m-\ell)\times m}$ such that the following augmented system (with $\ell = m$) is physically realizable:
\begin{align}
 da(t) &= F a(t) dt + G d{\cal A}(t); \quad a(0) = a \nonumber \\
 dY_a(t) &= \left[\begin{array}{cc} H \\ \hat{H} \end{array}\right] a(t) dt + \left[\begin{array}{cc} K \\ \hat{K} \end{array}\right] d{\cal A}(t) 
\end{align}
where $Y_a(t)$ is a $m$-dimensional output vector.

\begin{theorem}
A linear quantum stochastic system in annihilation-operator form \eqref{eq:qsde-anni} (with $\ell \leq m$) is physically realizable if and only if 
\begin{align}
 F + F^\dagger + GG^\dagger = 0 \label{eq:pr-anni-1} \\
 H^\dagger + G K^\dagger = 0 \label{eq:pr-anni-2} 
\end{align}
and there exists $\hat{K} \in \mathbb{C}^{(m-\ell)\times m}$ such that the matrix $\tilde{K} = (K^\top, \hat{K}^\top)^\top$ is a complex unitary matrix in the sense that
\begin{equation}
 \tilde{K}\tilde{K}^\dagger = \tilde{K}^\dagger \tilde{K} = I.  \label{eq:pr-anni-3}
\end{equation}
Note that when $\ell = m$, $\tilde{K} = K$.
\end{theorem}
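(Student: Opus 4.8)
The plan is to pull the whole problem back to the quadrature picture of Section~\ref{sec:prelim} through the standard ``doubled-up'' complexification, and then simply transcribe the realizability conditions \eqref{eq:pr-1}--\eqref{eq:pr-3}. I would first settle the square case $\ell=m$ and then read off the general case $\ell<m$ from the augmentation-based definition of physical realizability stated immediately above the theorem.

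For the square case, let $\Lambda=\frac{1}{\sqrt2}\left[\begin{smallmatrix}1&\imath\\1&-\imath\end{smallmatrix}\right]$ and, composing $I_n\otimes\Lambda$ with the permutation that reorders the quadrature vector into $(q_1,\dots,q_n,p_1,\dots,p_n)^\top$, form the $2n\times2n$ unitary $\mathcal{T}_n$ with $\mathcal{T}_nx=\sqrt2\,\breve a$, where $\breve a=(a^\top,a^{\#\top})^\top$; build $\mathcal{T}_m$ on the field vector and on the output quadratures the same way. Because the system is completely passive it admits the annihilation-operator representation \eqref{eq:qsde-anni}, which, by the characterisation cited from \cite{GGY08,GZ15}, is precisely the statement that in these coordinates the generators become block diagonal, $\mathcal{T}_nA\mathcal{T}_n^{-1}={\rm diag}(F,F^\#)$, $\mathcal{T}_nB\mathcal{T}_m^{-1}={\rm diag}(G,G^\#)$, and the same for $C$ and $D$, which yields $H$ and $K$ on the diagonal blocks of $\mathcal{T}_mC\mathcal{T}_n^{-1}$ and $\mathcal{T}_mD\mathcal{T}_m^{-1}$. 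A direct computation from the CCR $xx^\top-(xx^\top)^\top=2\imath\mathbb{J}_n$ gives $\mathcal{T}_n\mathbb{J}_n\mathcal{T}_n^\top=-\imath\left[\begin{smallmatrix}0&I_n\\-I_n&0\end{smallmatrix}\right]$ (the doubled-up commutation matrix), and similarly for $\mathcal{T}_m$.

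Next I would transform \eqref{eq:pr-1}--\eqref{eq:pr-3} under this change of variables, i.e.\ pre- and post-multiply each by the appropriate $\mathcal{T}$ and its transpose; since the $\mathcal{T}$'s are invertible this is an equivalence. Substituting the block-diagonal generators and the block form of $\mathcal{T}\mathbb{J}\mathcal{T}^\top$, each condition collapses to a $2\times2$ block-matrix equation whose only possibly nonzero blocks are off-diagonal: for \eqref{eq:pr-1} one block reads $F+F^\dagger+GG^\dagger=0$ and the other is its entrywise conjugate; for \eqref{eq:pr-2}, $H^\dagger+GK^\dagger=0$ and its conjugate; for \eqref{eq:pr-3}, $KK^\dagger-I=0$ and its conjugate. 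Since the conjugate block vanishes exactly when the first does, the three quadrature conditions are equivalent to \eqref{eq:pr-anni-1}, \eqref{eq:pr-anni-2}, and $KK^\dagger=I$, i.e.\ \eqref{eq:pr-anni-3} with $\tilde K=K$; both directions come for free because every step is an equivalence.

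Finally, for $\ell<m$ I would invoke the stated definition: \eqref{eq:qsde-anni} is physically realizable iff there exist $\hat H,\hat K$ making the $\ell=m$ system with output matrices $[H^\top\ \hat H^\top]^\top$, $[K^\top\ \hat K^\top]^\top$ physically realizable. Applying the square case to that augmented (still completely passive) system, this is equivalent to $F+F^\dagger+GG^\dagger=0$, $H^\dagger+GK^\dagger=0$, $\hat H^\dagger+G\hat K^\dagger=0$, and $\tilde K=(K^\top,\hat K^\top)^\top$ unitary; the third of these merely forces $\hat H=-\hat K G^\dagger$ and places no restriction on $\hat K$, so the existence of a valid augmentation reduces exactly to \eqref{eq:pr-anni-1}--\eqref{eq:pr-anni-2} together with the existence of $\hat K$ making $\tilde K$ unitary, which is the theorem. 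I expect the real work to be confined to the transcription step: fixing the normalisation constants and signs in the complexification so that the $GG^\dagger$, $GK^\dagger$, $KK^\dagger$ terms emerge with the correct coefficients, and checking carefully that discarding the ``conjugate'' blocks of the transformed equations loses no information, so that no hidden constraint beyond \eqref{eq:pr-anni-1}--\eqref{eq:pr-anni-3} survives. Conceptually the argument is routine, with the only external ingredient being the cited fact that complete passivity is equivalent to the block-diagonal doubled-up structure.
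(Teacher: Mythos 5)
Your proposal is correct, and it splits into two parts that compare differently with the paper's proof. For the non-square case $\ell<m$ your argument is essentially the paper's: both invoke the augmentation-based definition, obtain necessity of \eqref{eq:pr-anni-2} by restricting the augmented coupling condition to its first $\ell$ rows, and obtain sufficiency by setting $\hat H=-\hat K G^\dagger$ for any $\hat K$ completing $K$ to a unitary $\tilde K$, noting that this choice imposes no constraint beyond \eqref{eq:pr-anni-1}--\eqref{eq:pr-anni-3}. The real divergence is in the square case $\ell=m$: the paper simply cites \cite{GGY08,GZ15} for the equivalence, whereas you re-derive it from the quadrature realizability conditions \eqref{eq:pr-1}--\eqref{eq:pr-3} via the doubled-up complexification. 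That derivation is sound: with $\mathcal{T}_n\mathbb{J}_n\mathcal{T}_n^\top=-\imath\left[\begin{smallmatrix}0&I_n\\-I_n&0\end{smallmatrix}\right]$ and the block-diagonal generators ${\rm diag}(F,F^{\#})$, ${\rm diag}(G,G^{\#})$, ${\rm diag}(H,H^{\#})$, ${\rm diag}(K,K^{\#})$, each of \eqref{eq:pr-1}--\eqref{eq:pr-3} indeed reduces to an off-diagonal block equation plus its entrywise conjugate, giving exactly $F+F^\dagger+GG^\dagger=0$, $H^\dagger+GK^\dagger=0$, $KK^\dagger=I$ (and $K$ square makes $KK^\dagger=I$ equivalent to unitarity), so no information is lost by discarding the conjugate blocks, and the overall $\sqrt2$ normalisation cancels from every identity. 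What your route buys is a self-contained proof that does not lean on the cited equivalence, at the cost of duplicating known results and of having to make explicit a convention the paper leaves implicit, namely that physical realizability of the square annihilation-operator system means realizability \eqref{eq:pr-1}--\eqref{eq:pr-3} of the associated quadrature realization; you should state that definition up front (and keep the citation for the fact that complete passivity is equivalent to the block-diagonal doubled-up structure, which you do use). The paper's version is shorter precisely because it outsources the square case.
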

\begin{proof}
When $\ell = m$, the necessity and sufficiency of \eqref{eq:pr-anni-1}-\eqref{eq:pr-anni-3} are shown in \cite{GGY08,GZ15}.
When $\ell < m$, the necessity and sufficiency of \eqref{eq:pr-anni-1} and \eqref{eq:pr-anni-3} follows immediately from the corresponding physical realizability conditions of the augmented system with $\ell = m$. 
Let $\tilde{H} = (H,\hat{H})^\top$ for some $\hat{H} \in \mathbb{C}^{(m-\ell)\times n}$. For necessity of \eqref{eq:pr-anni-2}, we note from the corresponding physical realizability condition of the augmented system that $\tilde{H}^\dagger = -\tilde{G}\tilde{K}^\dagger$. Post-multiplying both sides of this equation by $(I_\ell, 0)^\top$, we have that ${H}^\dagger = -{G}{K}^\dagger$. For sufficiency of \eqref{eq:pr-anni-2}, let $\hat{H} = -\hat{K}G^\dagger$ where $\hat{K}$ is the complex matrix that makes \eqref{eq:pr-anni-3} holds. From this $\hat{H}$ and \eqref{eq:pr-anni-2}, we have that $\tilde{H}^\dagger = -\tilde{G}\tilde{K}^\dagger$. This establishes the theorem statement.
\end{proof}

\subsection{Reduced-order completely passive linear quantum stochastic systems}

Similar to the previous section, we will construct a reduced-order model of a completely linear quantum stochastic system 
via Galerkin projection.
That is, given a full-order linear quantum stochastic system in annihilation form \eqref{eq:qsde-anni}, 
we seek a reduced-order linear quantum stochastic system of the form
\begin{align}
 da_r(t) &= F_r a_r(t) dt + G_r d{\cal A}(t); \quad a_r(0) = V^\dagger a \nonumber \\
 dY_r(t) &= H_r a_r(t) dt + K_r d{\cal A}(t)  \label{eq:qsde-anni-reduced}
\end{align}
where $F_r = V_a^\dagger F V_a$, $G_r = V_a^\dagger G$, $H_r = HV_a$, and $K_r =K$. Here, the projection matrix $V_a \in \mathbb{C}^{n\times r}$ is orthonormal in the sense that $V_a^\dagger V_a = I_r$.

We now propose our tangential interpolatory projection model reduction method. We will only present the left tangential interpolatory projection in this section, but the right tangential interpolatory projection can be similarly constructed involving a different subspace as demonstrated previously.
To achieve the left tangential interpolation condition \eqref{eq:interpol-cond-left}, consider a subspace
\begin{align}
 {\cal V}_a 
 &\triangleq {\rm span}\left\{ \left(\mu_i^\dagger H(\sigma_iI-F)^{-1}\right)^\dagger \right\}_{i = 1,2,\ldots,r}
 \label{eq:Vsubspace-anni}
\end{align}
where interpolation points $\sigma_1,\sigma_2,\ldots,\sigma_{r}$ and the left tangent directions $\mu_1,\mu_2,\ldots,\mu_{r}$ are chosen such that:
(i) $(\sigma_i I - F)$ is invertible for all $i = 1,2,\ldots,r$,
(ii) $\mu_i \neq 0$ for all $i = 1,2,\ldots,r$, and
(iii) the subspace ${\cal V}_a$ has the dimension  $r$.
Again, the condition on the dimension of ${\cal V}_a$ ensures that the reduced-order system has $r$ degree of freedom. The interpolation points and tangent directions can be chosen as discussed in Section \ref{subsec:interpol-selection}, except that the points and directions do not need to be in conjugate pairs as $V_a$ can be a complex matrix. The following result is straightforward to show, so we shall simply state it without proof.

\begin{theorem} \label{thm:pr-reduced-anni}
Given a full-order linear quantum stochastic system written in the annihilation-operator form \eqref{eq:qsde-anni} with the transfer function $\Xi(s)$,
consider a reduced-order model of the form \eqref{eq:qsde-anni-reduced} with the projection matrix $V_a$ which is an orthonormal basis of the subspace ${\cal V}_a$ defined in \eqref{eq:Vsubspace-anni}.
Then the reduced-order model is physically realizable and its transfer function $\Xi_r(s)$ interpolates $\Xi(s)$ in the sense of \eqref{eq:interpol-cond-left}.
\end{theorem}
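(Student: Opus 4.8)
The plan is to mirror the structure of the proof of Theorem \ref{thm:pr-reduced-quad}, but now in the annihilation-operator picture, where the physical realizability conditions \eqref{eq:pr-anni-1}--\eqref{eq:pr-anni-2} are simpler (equalities in $F,G,H,K$ alone) and the projection is an ordinary Galerkin projection with a single orthonormal matrix $V_a$ satisfying $V_a^\dagger V_a = I_r$. First I would establish interpolation: since $V_a$ is an orthonormal basis of ${\cal V}_a$ defined in \eqref{eq:Vsubspace-anni}, it has full rank and $\left(\mu_i^\dagger H(\sigma_i I - F)^{-1}\right)^\dagger \in {\rm range}(V_a)$ for all $i$, so the left-interpolation part of Theorem \ref{thm:interpol-quad}(i) applies verbatim (with $(C,A,B)$ replaced by $(H,F,G)$ and $W_q$ by $V_a$), giving $\mu_i^\dagger \widehat{\Xi}_r(\sigma_i) = \mu_i^\dagger \widehat{\Xi}(\sigma_i)$; because $K_r = K$, the full interpolation condition \eqref{eq:interpol-cond-left} follows.

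Next I would verify physical realizability of the reduced system \eqref{eq:qsde-anni-reduced}. Pre- and post-multiply \eqref{eq:pr-anni-1} by $V_a^\dagger$ and $V_a$: using $F_r = V_a^\dagger F V_a$ and $G_r = V_a^\dagger G$ one gets $F_r + F_r^\dagger + G_r G_r^\dagger = V_a^\dagger (F + F^\dagger + G G^\dagger) V_a = 0$, which is \eqref{eq:pr-anni-1} for the reduced system. For \eqref{eq:pr-anni-2}, pre-multiply $H^\dagger + G K^\dagger = 0$ by $V_a^\dagger$: $V_a^\dagger H^\dagger + V_a^\dagger G K^\dagger = (H V_a)^\dagger + G_r K^\dagger = H_r^\dagger + G_r K_r^\dagger = 0$, using $H_r = H V_a$ and $K_r = K$. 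Finally, the unitary-augmentation condition \eqref{eq:pr-anni-3} is inherited for free: $K_r = K$, so the same $\hat K$ that makes $\tilde K = (K^\top, \hat K^\top)^\top$ unitary works for the reduced system. This is why no new analysis of the $D$-type constraint is needed, in contrast to the quadrature case where one had to carefully engineer $W_q^\top \mathbb{J}_n W_q = \mathbb{J}_r$.

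The reason this is ``straightforward to show'' and stated without proof in the excerpt is precisely that the hard structural obstacle present in the quadrature setting — the need to preserve the symplectic form $\mathbb{J}_n$ under projection, which forced the coupled construction \eqref{eq:left-proj-quad}--\eqref{eq:right-proj-quad} with $W_q \ne V_q$ and the transformation $T$ from Lemma \ref{lem:skewSymmetric} — simply does not arise here. In the annihilation picture the relevant invariant structure is the Hermitian inner product, which an orthonormal $V_a$ (with $V_a^\dagger V_a = I_r$, and the Galerkin choice $W_a = V_a$) automatically respects, so the congruence $X \mapsto V_a^\dagger X V_a$ carries \eqref{eq:pr-anni-1}--\eqref{eq:pr-anni-2} through cleanly.

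If I had to name a genuine subtlety it would be bookkeeping about dimensions rather than mathematics: one should check that when $\ell < m$ the reduced $\hat H_r$ can be taken as $\hat H_r = -\hat K G_r^\dagger$ (consistent with the sufficiency argument in the preceding theorem's proof) so that the augmented reduced system is physically realizable in the $\ell = m$ sense; this is immediate from \eqref{eq:pr-anni-2} for the reduced system together with the inherited unitarity of $\tilde K$. Condition (iii) under \eqref{eq:Vsubspace-anni} (that ${\cal V}_a$ has dimension exactly $r$) guarantees $V_a$ has the right size for $F_r, G_r, H_r$ to define an $r$-degree-of-freedom system, and conditions (i)--(ii) guarantee the spanning vectors in \eqref{eq:Vsubspace-anni} are well-defined and nonzero. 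Assembling these observations in the order above completes the proof.
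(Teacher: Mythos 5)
Your proposal is correct and is essentially the argument the paper intends: the paper omits this proof as ``straightforward,'' and your route --- interpolation from Theorem \ref{thm:interpol-quad}(i) with $W=V=V_a$ (valid since $V_a^\dagger V_a=I_r$ and $\left(\mu_i^\dagger H(\sigma_i I-F)^{-1}\right)^\dagger\in{\rm range}(V_a)$), physical realizability via the congruence $X\mapsto V_a^\dagger X V_a$ applied to \eqref{eq:pr-anni-1}--\eqref{eq:pr-anni-2}, and \eqref{eq:pr-anni-3} inherited because $K_r=K$ --- is exactly the annihilation-operator mirror of the proof of Theorem \ref{thm:pr-reduced-quad}. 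Nothing further is needed.
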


The tangential interpolatory projection method provides an avenue to reduce the order of a linear quantum stochastic system whilst ensuring that the reduced system is both physical realizable (as shown in Theorem \ref{thm:pr-reduced-anni}) and completely passive (because the system dynamics are in annihilation-operator form). A key advantage of our method is that it is applicable to a larger class of completely passive linear quantum stochastic systems in comparison to some existing methods such as the quasi-balanced truncation method \cite{Nurd14} (which is applicable to those with unequal Hankel singular value of the product of the system's controllability and observability Gramians, meaning that the system has less inputs than it does outputs), and the eigenvalue truncation method \cite{Pet12} (which is applicable to systems with distinct eigenvalues). The following result is the analogue of Proposition \ref{prop:error-bound-left} for completely passive systems and is proved in an analogous way.

\begin{proposition}
Consider a reduced-order model of the form \eqref{eq:qsde-anni-reduced} with the projection matrix $V_a$ which is an orthonormal basis of the subspace ${\cal V}_a$ defined in \eqref{eq:Vsubspace-anni}.
Let ${\cal U}_1(s) = \ker\{ V_a^\dagger (sI-F)^\dagger \}$ and ${\cal U}_2(s) = \ker\{ V_a^\dagger (sI-F) \}$. 
For any $s \in \mathbb{C}$ that is not an eigenvalue of either $F$ or $F_r$, we have that
\begin{align}
 \left\| \Xi(s) - \Xi_r(s) \right\| 
 &= \left\| H(sI-F)^{-1}(I-Q(s)) G \right\|, \\
 &= \left\| H(I-R(s)) (sI-F)^{-1} G \right\|,
\end{align}
where $Q(s) = (sI - F) V_a (sI-F_r)^{-1} V_a^\dagger$ and $R(s) = V_a (sI-F_r)^{-1} V_a^\dagger(sI-F)$ are oblique projection operators (i.e., $Q(s)$ and $R(s)$ are idempotent),
and when $F$ and $F_r$ have no poles on the right half plane and imaginary axis, we have the following ${\cal H}_{\infty}$ bounds:
\begin{align}
 \left\| \Xi - \Xi_r \right\|_{\infty}
 &\leq \mathop{\sup}_{\omega \in \mathbb{R}} \left( \left({1 - \| P_{{\cal V}_a^\perp} - P_{{\cal U}_1(\imath\omega)} \|^2  }\right)^{-1/2} \right. \nonumber \\
 &\quad \left. \times \left\| H(\imath\omega I-F)^{-1} P_{{\cal V}_a^\perp} \right\|  \left\| P_{{\cal U}_1(\imath \omega)} G \right\|\right), \label{eq:passive-left-err-bound} \\
 \left\| \Xi  - \Xi_r  \right\|_{\infty}  
 &\leq  \mathop{\sup}_{\omega \in \mathbb{R}} \left( \left({1 - \| P_{{\cal V}_a^\perp} - P_{{\cal U}_2(\imath\omega)} \|^2  }\right)^{-1/2} \right.  \nonumber \\
&\quad \left. \times \left\| H P_{{\cal U}_2(\imath \omega)} \right\| \left\| P_{{\cal V}_a^\perp} (\imath\omega I-F)^{-1} G  \right\|\right). \label{eq:passive-right-err-bound} 
\end{align}
\end{proposition}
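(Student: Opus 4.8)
The plan is to mirror the proof of Proposition \ref{prop:error-bound-left} line for line, replacing the quadrature-form objects $(A,B,C)$, $\mathbb{J}_n$-based construction, and pair $(W_q,V_q)$ by the annihilation-form objects $(F,G,H)$ and the single orthonormal projector $V_a$. First I would fix $s\in\mathbb{C}$ away from the spectra of $F$ and $F_r$ so that $(sI-F)$, $(sI-F_r)$, and $(sI-F)^\dagger$ are all invertible, and verify that $Q(s)=(sI-F)V_a(sI-F_r)^{-1}V_a^\dagger$ is idempotent: indeed $Q(s)^2 = (sI-F)V_a(sI-F_r)^{-1}\big(V_a^\dagger(sI-F)V_a\big)(sI-F_r)^{-1}V_a^\dagger = (sI-F)V_a(sI-F_r)^{-1}F_r$-type cancellation — more precisely using $V_a^\dagger(sI-F)V_a = sI-F_r = (sI-F_r)$, the middle factor collapses to the identity and $Q(s)^2=Q(s)$. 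The same computation shows $R(s)$ is idempotent.

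Next I would identify the relevant ranges and kernels. Since $(sI-F)V_a(sI-F_r)^{-1}$ has full column rank $r$, $\operatorname{range}\{Q(s)\}=\operatorname{range}\{(sI-F)V_a\}$ and $\ker\{Q(s)\}=\ker\{V_a^\dagger\} = {\cal V}_a^\perp$ — the latter because $\operatorname{range}(V_a)={\cal V}_a$ and $V_a$ is orthonormal. Therefore $\ker\{I-Q(s)\}=\operatorname{range}\{Q(s)\}$ and, using $\ker\{I-Q(s)\}^\perp = \operatorname{range}\{I-Q(s)^\dagger\}=\ker\{Q(s)^\dagger\}$ (the identity cited from \cite[Theorem 6.6]{Brezis2011} in the earlier proof) together with the fact that $V_a(sI-F_r)^{-\dagger}$ has full rank, I get $\ker\{I-Q(s)\}^\perp = \ker\{V_a^\dagger(sI-F)^\dagger\} = {\cal U}_1(s)$. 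Hence $I-Q(s) = P_{{\cal V}_a^\perp}\,(I-Q(s))\,P_{{\cal U}_1(s)}$. The error identity then follows from the resolvent/Petrov--Galerkin algebra exactly as before:
\begin{align*}
 \Xi(s)-\Xi_r(s)
 &= H(sI-F)^{-1}G - HV_a(sI-F_r)^{-1}V_a^\dagger G \\
 &= H(sI-F)^{-1}\big[I - (sI-F)V_a(sI-F_r)^{-1}V_a^\dagger\big]G
  = H(sI-F)^{-1}(I-Q(s))G,
\end{align*}
and inserting the projector factorization gives $\Xi(s)-\Xi_r(s) = H(sI-F)^{-1}P_{{\cal V}_a^\perp}(I-Q(s))P_{{\cal U}_1(s)}G$. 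Submultiplicativity of the operator norm, specializing to $s=\imath\omega$, taking the supremum over $\omega\in\mathbb{R}$, and using the identity $\|I-Q(\imath\omega)\| = \sec\phi\big(\operatorname{range}\{I-Q(\imath\omega)\},\ker\{I-Q(\imath\omega)\}^\perp\big) = \big(1-\|P_{{\cal V}_a^\perp}-P_{{\cal U}_1(\imath\omega)}\|^2\big)^{-1/2}$ from \cite[Theorem 6.1]{Szyld06} yields \eqref{eq:passive-left-err-bound}. The bound \eqref{eq:passive-right-err-bound} and the second error identity follow by the symmetric argument applied to $R(s)=V_a(sI-F_r)^{-1}V_a^\dagger(sI-F)$, whose kernel is ${\cal V}_a^\perp$-pre-image under $(sI-F)$, i.e. $\ker\{R(s)\}=\ker\{V_a^\dagger(sI-F)\}={\cal U}_2(s)$ and whose range is $\operatorname{range}(V_a)={\cal V}_a$.

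I do not anticipate a genuine obstacle, since complex conjugate-transposes behave formally just like the real transposes in Proposition \ref{prop:error-bound-left} and there is no $\mathbb{J}_n$-structure to track; the one point requiring a little care is the well-definedness of all the projectors and the factorization $I-Q(s)=P_{{\cal V}_a^\perp}(I-Q(s))P_{{\cal U}_1(s)}$ at the purely imaginary points $s=\imath\omega$, which needs the standing hypothesis that $F$ (hence, for the bound, also $F_r$) has no eigenvalue on the imaginary axis — exactly the stability assumption stated in the proposition. The rest is the same oblique-projection secant-of-principal-angle bookkeeping used earlier, so I would simply write ``the proof is analogous to that of Proposition \ref{prop:error-bound-left}, with $(A,B,C,\mathbb{J}_nW_q(W_q^\top\mathbb{J}_nW_q)^{-1},W_q)$ replaced by $(F,G,H,V_a,V_a)$.''
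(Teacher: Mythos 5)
Your proposal is correct and is exactly the argument the paper intends: the paper states this result "is proved in an analogous way" to Proposition \ref{prop:error-bound-left}, and your line-by-line substitution $(A,B,C,W_q,V_q)\mapsto(F,G,H,V_a,V_a)$ — using $V_a^\dagger V_a=I$ to get idempotency of $Q(s)$ and $R(s)$, identifying $\ker\{Q(s)\}={\cal V}_a^\perp$ and $\ker\{I-Q(s)\}^\perp={\cal U}_1(s)$, factoring $I-Q(s)=P_{{\cal V}_a^\perp}(I-Q(s))P_{{\cal U}_1(s)}$, and invoking the secant-of-principal-angle identity for $\|I-Q(\imath\omega)\|$ (and symmetrically for $R$) — carries that analogy through correctly.
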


\subsection{Stability Property}

In this subsection, we will present some sufficient conditions that guarantee asymptotic stability of the reduced-order completely passive linear quantum stochastic system.

\begin{lemma} \label{lem:stab-passive}
Given a linear quantum stochastic system written in annihilation-operator form \eqref{eq:qsde-anni},
consider a reduced-order model of the form \eqref{eq:qsde-anni-reduced} with a projection matrix $V_a$ that is an orthonormal basis of the subspace ${\cal V}_a$ defined in \eqref{eq:Vsubspace-anni}. The reduced-order system $(F_r,G_r,H_r,K_r)$ is asymptotically stable and is a minimal (controllable and observable) realization if and only if 
\begin{equation}
 G^\dagger V_a z_r \neq 0, \label{eq:stab-iff}
\end{equation}
for each eigenvector $z_r$ of $F_r$.
Moreover, \eqref{eq:stab-iff} holds whenever one of the following sufficient conditions is satisfied:
\begin{enumerate}
 \item[(i)] ${\cal V}_a \subseteq {\ker\{G^\dagger \}}^\perp$.
 \item[(ii)] the interpolation points $\sigma_i \neq \imath \omega_r$ for all $i = 1,2,\ldots,r$ where $\imath \omega_r$ is an eigenvalue of the resulting reduced-order system matrix $F_r$. 
\end{enumerate}
\end{lemma}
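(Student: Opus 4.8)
The plan is to reduce everything to the physical realizability identities of the reduced model, $F_r+F_r^\dagger+G_rG_r^\dagger=0$ and $H_r^\dagger+G_rK_r^\dagger=0$, which hold by Theorem~\ref{thm:pr-reduced-anni}; note that $G_r^\dagger=G^\dagger V_a$, so condition \eqref{eq:stab-iff} is exactly the statement that $G_r^\dagger z_r\neq 0$ for every eigenvector $z_r$ of $F_r$. First I would prove the equivalence with asymptotic stability. If $z_r$ is a unit eigenvector of $F_r$ with eigenvalue $\lambda_r$, sandwiching the first identity between $z_r^\dagger$ and $z_r$ gives $2\Re(\lambda_r)+\|G_r^\dagger z_r\|^2=0$, that is, $\Re(\lambda_r)=-\tfrac12\|G^\dagger V_a z_r\|^2$. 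Hence $F_r$ is Hurwitz (equivalently the reduced system is asymptotically stable) if and only if $G^\dagger V_a z_r\neq 0$ for all eigenvectors, which is \eqref{eq:stab-iff}; in particular the ``only if'' direction of the lemma follows from asymptotic stability alone.

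Next, assuming \eqref{eq:stab-iff}, so that $F_r$ is Hurwitz, I would establish minimality via Gramians. The controllability Gramian $P_r$ is the unique solution of $F_rP_r+P_rF_r^\dagger+G_rG_r^\dagger=0$; but $P_r=I$ solves this equation --- it is the reduced realizability condition itself --- so $P_r=I$ is positive definite and $(F_r,G_r)$ is controllable. For observability I would show that no eigenvector $z_r$ of $F_r$ can satisfy $H_rz_r=0$: set $v:=V_az_r\in\mathcal{V}_a$, which is nonzero since $V_a$ has full column rank. Using $H_r=-K_rG_r^\dagger$ and $H=-KG^\dagger$ (from \eqref{eq:pr-anni-2}), $H_rz_r=0$ forces $Hv=-KG^\dagger v=0$; meanwhile $V_a^\dagger(F-\lambda_rI)v=F_rz_r-\lambda_rz_r=0$, so $(F-\lambda_rI)v\perp\mathcal{V}_a$. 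Pairing the latter with the spanning functionals $\bigl(\mu_i^\dagger H(\sigma_iI-F)^{-1}\bigr)^\dagger$ of $\mathcal{V}_a$, using the resolvent identity $(\sigma_iI-F)^{-1}(F-\lambda_rI)=-I+(\sigma_i-\lambda_r)(\sigma_iI-F)^{-1}$ together with $Hv=0$, gives $(\sigma_i-\lambda_r)\mu_i^\dagger H(\sigma_iI-F)^{-1}v=0$ for every $i$; provided no $\sigma_i$ equals $\lambda_r$, this makes $v$ orthogonal to a spanning set of $\mathcal{V}_a$ that contains $v$, forcing $v=0$, a contradiction. I expect the main obstacle to be precisely this observability step: when $\ell<m$ one must pass to the augmented realization so that the feedthrough entering the argument is unitary, and the degenerate coincidence $\lambda_r=\sigma_i$ has to be ruled out or handled by a separate argument.

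Finally, for the sufficient conditions: under (i), every eigenvector $z_r$ of $F_r$ maps to a nonzero vector $V_az_r\in\mathcal{V}_a\subseteq(\ker\{G^\dagger\})^\perp$, which meets $\ker\{G^\dagger\}$ only at $0$, so $G^\dagger V_az_r\neq 0$ and \eqref{eq:stab-iff} holds. Under (ii), suppose for contradiction that $F_r$ had a purely imaginary eigenvalue $\imath\omega_r$ with unit eigenvector $z_r$; then $G^\dagger V_az_r=0$ by the first paragraph, so $v:=V_az_r$ satisfies $G^\dagger v=0$, hence $Hv=-KG^\dagger v=0$ and $(F-\imath\omega_rI)v\perp\mathcal{V}_a$ exactly as above. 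Running the same pairing/resolvent argument with $\lambda_r=\imath\omega_r$ and invoking $\sigma_i\neq\imath\omega_r$ for all $i$ gives $\mu_i^\dagger H(\sigma_iI-F)^{-1}v=0$ for every $i$, hence $v=0$, contradicting $\|v\|=1$. So $F_r$ has no purely imaginary eigenvalue and \eqref{eq:stab-iff} follows, completing the proof.
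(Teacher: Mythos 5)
Your first step (sandwiching $F_r+F_r^\dagger+G_rG_r^\dagger=0$ with an eigenvector to get $\Re\{\lambda_r\}=-\tfrac12\|G_r^\dagger z_r\|^2$, hence stability $\Leftrightarrow$ \eqref{eq:stab-iff}) is exactly the paper's argument, and your treatments of the sufficient conditions (i) and (ii) also coincide with the paper's: your resolvent-identity pairing of $(F-\lambda_r I)v$ against the spanning vectors $\left(\mu_i^\dagger H(\sigma_i I-F)^{-1}\right)^\dagger$ is the same computation the paper carries out via the identity $V^\dagger F-\Sigma V^\dagger+U^\dagger H=0$ post-multiplied by $V_a z_r$; your inner products $v_i^\dagger v$ are precisely the components of the paper's vector $T_a^\dagger z_r$, so conditions (i) and (ii) are handled correctly and in essentially the same way.

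The genuine divergence, and the gap, is the minimality claim. The paper disposes of it in one line by citing \cite[Lemma 2]{GZ15}: for completely passive linear quantum systems asymptotic stability is equivalent to minimality, so once stability $\Leftrightarrow$ \eqref{eq:stab-iff} is established nothing more is needed. You instead attempt a direct proof. Your controllability half is fine ($P_r=I$ solves the reduced Lyapunov equation, and with $F_r$ Hurwitz it is the controllability Gramian, which is positive definite). But your observability half is incomplete, as you yourself flag: the resolvent argument only yields a contradiction when no interpolation point coincides with the eigenvalue $\lambda_r$ under consideration, and \eqref{eq:stab-iff} does not rule this out --- the $\sigma_i$ are only required not to be eigenvalues of $F$ and may lie in the open left half-plane, exactly where the eigenvalues of a Hurwitz $F_r$ live. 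So the implication ``\eqref{eq:stab-iff} $\Rightarrow$ observable'' is not established in your write-up. Note also that when $\ell=m$ the repair is immediate and needs none of the resolvent machinery: $K_r=K$ is unitary and $H_r=-K_rG_r^\dagger$ by \eqref{eq:pr-anni-2} applied to the reduced system, so $H_rz_r=-K_rG_r^\dagger z_r\neq 0$ under \eqref{eq:stab-iff}, which is PBH observability directly; the only delicate case is $\ell<m$ (where $K$ has a nontrivial kernel), which is what your ``augmented realization'' remark gestures at and which the paper sidesteps entirely through the citation. To reach a complete proof you should either invoke that known stability--minimality equivalence as the paper does, or finish the $\ell<m$ observability argument (e.g., by handling the coincidence $\lambda_r=\sigma_i$ separately).
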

\begin{proof}
First note from Theorem \eqref{thm:pr-reduced-anni} that the reduced system satisfies 
\begin{align*}
 F_r + F_r^\dagger + G_rG_r^\dagger = 0
\end{align*} 
Pre- and post-multiplying the above equation by $z_r^\dagger$ and $z_r$, respectively, we have that
\begin{align}
 2\Re\{\lambda_r \} \|z_r\|^2 + \left\|G_r^\dagger z_r\right\|^2 = 0, \label{eq:stab-proof1}
\end{align} 
where $\lambda_r$ is the eigenvalue of $F_r$ corresponding to $z_r$.
This implies that the system is stable (i.e., $\Re\{\lambda_r\} < 0$ for all eigenvalues $\lambda_r$) if and only if $\left\|G_r^\dagger z_r\right\| = \left\|G^\dagger V_a z_r\right\| \neq 0$ for all eigenvectors $z_r$. Minimality  of the reduced model then follows from \cite[Lemma 2]{GZ15}, where it is shown that  for completely passive linear quantum systems asymptotic stability is equivalent to minimality.

We now show the sufficiency of Conditions (i) and (ii).

{\em Sufficiency of Condition (i):} Since $V_a$ is an orthonormal basis of ${\cal V}_a$, we have that $V_a z_r \in {\cal V}_a$. 
Now because ${\cal V}_a \subseteq {\ker\{ B^\dagger \}}^\perp$, $G^\dagger V_a z_r \neq 0$. 

{\em Sufficiency of Condition (ii):} First note from \eqref{eq:stab-proof1} that the poles of the reduced system do not lie on the right-half plane.
We then prove the sufficiency of this condition by showing that when $G_r^\dagger z_r = 0$, the interpolation point is $\sigma_i = \imath \omega_r$ for some $i = 1,2,\ldots,r$. Note that this proof follows similar construction to the proof of \cite[Theorem 11]{GPBV12}.
Suppose that $\Xi_r(s)$ has a pole on the imaginary axis $\imath \omega_r$. 
Let $z_r$ be the eigenvector corresponding to the eigenvalue $\imath\omega_r$ of $F_r$, i.e., $F_r z_r = \imath \omega_r z_r$.
Also note, from the physical realizability of the reduce system (established in Theorem \ref{thm:pr-reduced-anni}), that $H_r = -K_r G_r^\dagger $. Therefore, from this identity and \eqref{eq:stab-proof1}, we have that $H_r z_r = 0$ whenever $G_r^\dagger z_r = 0$. 

Now let $v_i = (\mu_i^\dagger H(\sigma_iI - F)^{-1})^\dagger$. We also let $V = [v_1\;v_2\;\ldots\;v_r]$ and $U = [\mu_1\;\mu_2\;\ldots\;\mu_r]$. From the definitions of $v_i$, $V$, and $U$, we have that
\begin{equation}
 V^\dagger F - \Sigma V^\dagger + U^\dagger H = 0 \label{eq:stab-anni-3}
\end{equation}
where $\Sigma = {\rm diag}(\sigma_1,\sigma_2,\ldots,\sigma_r)$ is a diagonal matrix with the interpolation points on its diagonal. Note that $V$ is also a basis of ${\cal V}_a$.
Since the projection matrix $V_a$ used to construct the reduced system is an orthonormal basis of ${\cal V}_a$,
there exists a non-singular (full-rank) matrix $T_a \in \mathbb{C}^{r \times r}$ such that $V = V_a T_a$.
Substituting $V_a T_a$ for $V$ and post-multiplying \eqref{eq:stab-anni-3} by $V_a z_r$, we have that
\begin{align}
 T_a^\dagger V_a^\dagger F V_a z_r - \Sigma T_a^\dagger V_a^\dagger V_a z_r + U^\dagger H V_a z_r &= 0 \nonumber \\
 T_a^\dagger F_r z_r - \Sigma T_a^\dagger z_r + U^\dagger H_r z_r &= 0 \nonumber \\
 \imath \omega_r T_a^\dagger z_r - \Sigma T_a^\dagger z_r &= 0 \nonumber \\
 \left(\imath \omega_r I_r - \Sigma \right) T_a^\dagger z_r &= 0 \label{eq:stab-anni-4}
\end{align}
The 3rd step follows from the previously obtained result that $H_r z_r = 0$.
Since $T_a$ has full rank and $z_r$ is non-trivial (because it is an eigenvector), we have that $T_a^\dagger z_r \neq 0$.
Moreover, because $\Sigma = {\rm diag}(\sigma_1,\sigma_2,\ldots,\sigma_r)$, the result \eqref{eq:stab-anni-4} implies that $\sigma_i = \imath \omega_r$ for some $i = 1,2,\ldots,r$.
This establishes the sufficiency of Condition (ii) and hence the lemma statement.
\end{proof}

We stress that the Condition (ii) of Lemma \ref{lem:stab-passive} is satisfied in many situations because it is generally improbable to find a purely imaginary interpolation point such that the point matches the imaginary pole of the resulting reduced system by chance. Hence, our proposed model reduction method for completely passive systems will generically lead to an asymptotically stable reduced-order linear quantum stochastic system.

\subsection{Illustrative example (passive system)}

\begin{example} \label{ex:passive2}
Consider a cascade connection of five identical optical cavities with the decay rate $\gamma = 10^6$ and the resonant frequency $\omega_0$ ($\omega_0$ is much larger than $\gamma$ but the exact value is not important here). Each cavity consists of two mirrors labeled M1 and M2. The mirrors M1 and M2 of cavity 1 are driven by coherent light  fields ${\cal A}_1$ and ${\cal A}_2$, respectively. The amplitudes of these fields are modulated by a carrier frequency that is matched to $\omega_0$. The light reflected off M1 and M2 of cavity $j$ drives the mirrors M1 and M2 of cavity $j+1$, respectively. The two outputs of the system are the light reflected off the mirrors of cavity 5. Let $a_j$ describes the oscillator mode of cavity $j$. Working in the rotating frame of reference with respect to $\omega_0$, the dynamics of this system can be described in annihilation-operator form \eqref{eq:qsde-anni} with 

{\small \arraycolsep=3pt\def\arraystretch{1.2}
\begin{align*}
 &F = \left[\begin{array}{ccccc}
                  -\gamma & 0 & 0 & 0 & 0 \\ 
                  -2\gamma & -\gamma & 0 & 0 & 0 \\ 
                  -2\gamma & -2\gamma & -\gamma & 0 & 0 \\
                  -2\gamma & -2\gamma & -2\gamma & -\gamma & 0 \\
                  -2\gamma & -2\gamma & -2\gamma & -2\gamma & -\gamma
                 \end{array}\right],  \quad
 G = \left[\begin{array}{cc}
                  -\sqrt{\gamma} & -\sqrt{\gamma} \\
                  -\sqrt{\gamma} & -\sqrt{\gamma} \\
                  -\sqrt{\gamma} & -\sqrt{\gamma} \\
                  -\sqrt{\gamma} & -\sqrt{\gamma} \\
                  -\sqrt{\gamma} & -\sqrt{\gamma}
                 \end{array}\right], \\
\end{align*}
} 
$H = -G^\dagger$, and $K = I$. The frequency response (Bode plot) of the original system is illustrated in Fig. \ref{fig:passive} by black solid lines. Note that the response across the negative frequencies is a mirror image of the response across the positive frequencies (i.e., the responses are symmetric around the origin) because $F,G,H,K$ are real matrices.

In this example, we are interested in obtaining an approximating system with $r=3$. Note that the balanced truncation method in \cite{Nurd14} is not suitable here because the product of the controllability and observability Gramians is identity, leading to equal Hankel singular values of 1. The eigenvalue truncation method in \cite{Pet12} (truncating subsystems corresponding to larger eigenvalues) is also not suitable because the poles of the system are all at $-10^6$. 
Therefore, we apply the left tangential interpolatory projection model reduction approach. As suggested in Section \ref{subsec:interpol-selection}, we choose the tangent directions to be $(\mu_1,\mu_2,\mu_3) = (e_1,e_1,e_1)$. Note from the system dynamics that choosing $(\mu_1,\mu_2,\mu_3) = (e_1,e_1,e_1)$ or $(\mu_1,\mu_2,\mu_3) = (e_2,e_2,e_2)$ would result in the same subspace ${\cal V}_a$.
Since the input-output responses of the original system have characteristics of both low-pass and high-pass filters, we simplify the optimization by assuming that the interpolation points are of the form $(\sigma_1,\sigma_2,\sigma_3) = (\imath\omega^c,0,-\imath\omega^c)$ (i.e., the interpolation point at 0 matches the low-frequency responses while the other points match the high-frequency responses). 
This form of interpolation points also preserve the symmetric properties of the frequency responses around the origin (i.e., $F_r$, $G_r$, and $H_r$ are real matrices).
We find a local minimizer of the ${\cal H}_2$ optimization problem to obtain the value of $\omega^c$ because the ${\cal H}_\infty$ error is $2$ for any $\omega^c > 0$. We have that $\omega^c = 1.48\times 10^7$. This $\omega^c$ leads to a stable system with the poles at $-5.1541\times 10^5$ and $(-1.0780\pm 0.8142\imath)\times 10^7$.

Fig. \ref{fig:passive} illustrates the frequency response (Bode plot) of the reduced system using red dashed lines. From the figure, we see that the frequency responses of our approximation are closely matched to those of the original system along the pass bands. The error $\|\Xi(s)-\Xi_r(s)\|$ for each $s$ is shown in Fig. \ref{fig:passive2-err} and \ref{fig:passive2-err2}. 
Again, the error is large when $s$ is close to the poles of the systems.
As previously mentioned, the ${\cal H}_\infty$ error is $2.00$. The ${\cal H}_\infty$ error bounds \eqref{eq:passive-left-err-bound} and \eqref{eq:passive-right-err-bound} are both $2.92$. Here, the bounds are conservative because passive quantum systems have bounded real transfer functions, i.e., $\|\Xi(\imath\omega)\| \leq 1$ for all $\omega \in \mathbb{R}$ \cite{MP11}, which implies that the ${\cal H}_\infty$ error $\|\Xi - \Xi_r\|_{\infty} \leq 2$.

\begin{figure}[!t]
 \centering
 \includegraphics[width=0.6\textwidth]{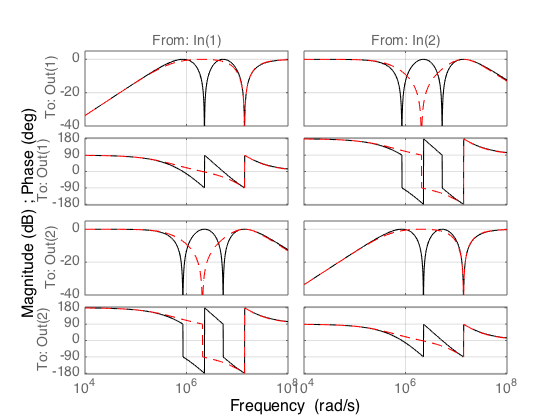}
 \caption{Example \ref{ex:passive2}: The frequency responses (Bode plots) of the original system (black solid line) and the tangential interpolatory projection approximation (red dashed line).}
 \label{fig:passive}
\end{figure}

\begin{figure}[!t]
\centering
\includegraphics[width=0.6\textwidth]{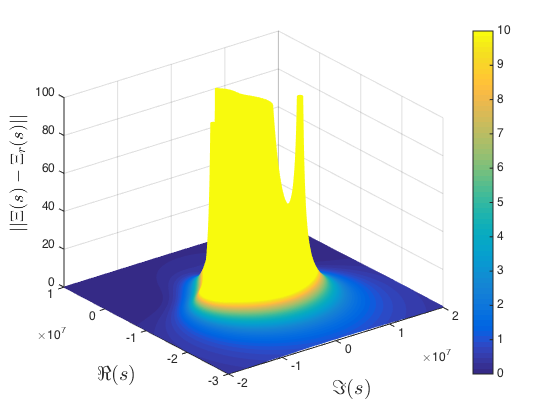}
\caption{Example \ref{ex:passive2}: The approximation error $\|\Xi(s)-\Xi_r(s)\|$.}
\label{fig:passive2-err}
\end{figure}
\begin{figure}[!t]
\centering
\includegraphics[width=0.5\textwidth]{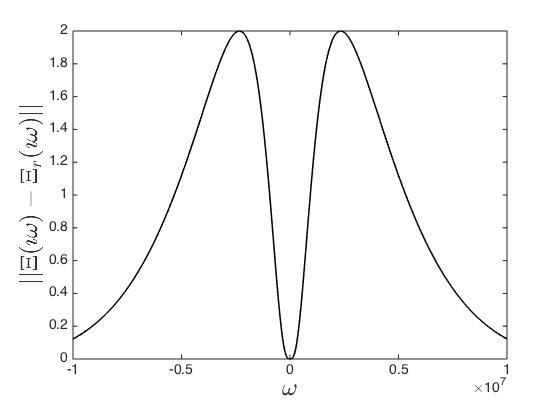}
\caption{Example \ref{ex:passive2}: The approximation error for each purely imaginary point (or frequency) $\|\Xi(\imath\omega)-\Xi_r(\imath\omega)\|$.}
\label{fig:passive2-err2}
\end{figure}
\end{example}

\section{Conclusion} \label{sec:conclusion}

In this paper, we have proposed a tangential interpolatory projection model reduction method for linear quantum stochastic systems. The proposed approach retains the required physical realizability property of the original full-order system while ensuring that the transfer function of the reduced-order system matches that of the original system at multiple points (or frequencies) along some tangent directions. That is, the reduced system can be designed to match specified input-output responses of the original system at various frequencies. We also establish an ${\cal H}_{\infty}$ error bound for the proposed method, formulate optimization based routines to select interpolation points along the imaginary axis, and introduce a heuristic for selecting tangent directions. A passivity preserving model reduction method has also been proposed. We establish a new result illustrating that our reduced-order completely passive linear quantum stochastic system will typically be asymptotically stable. Significantly, our tangential interpolatory projection approach can be applied to a wider class of linear quantum stochastic systems compared to existing model reduction methods for linear quantum systems. Several examples are provided to illustrate the merits of our proposed model reduction approaches.

\small
\bibliographystyle{IEEEtran}
\bibliography{IEEEabrv,rip,otr}

\end{document}